\documentclass[11pt, reqno]{amsart}
\usepackage[margin=1in]{geometry}
\usepackage{amsmath, amssymb, amsthm,bbm,bm}
\usepackage[shortlabels]{enumitem}
\usepackage{url}
\usepackage{tikz}
\usepackage[hidelinks]{hyperref}
\usepackage[capitalize]{cleveref}
\usepackage[noadjust]{cite}

\theoremstyle{plain}
\newtheorem{theorem}{Theorem}
\newtheorem{lemma}[theorem]{Lemma}
\newtheorem{corollary}[theorem]{Corollary} 
\newtheorem{proposition}[theorem]{Proposition} 
\newtheorem{claim}[theorem]{Claim}

\theoremstyle{definition}
\newtheorem{definition}[theorem]{Definition}

\theoremstyle{remark}
\newtheorem*{remark}{Remark}

\numberwithin{theorem}{section}
\numberwithin{equation}{section}

\newcommand{\floor}[1]{\left\lfloor #1 \right\rfloor}
\newcommand{\ceil}[1]{\left\lceil #1 \right\rceil}
\newcommand{\paren}[1]{\left( #1 \right)}

\newcommand{\sqb}[1]{\left[ #1 \right]}

\newcommand{\E}{\mathbb E}

\newcommand{\R}{\mathbb R}

\newcommand{\fB}{\mathcal B}
\newcommand{\fS}{\mathcal S}
\newcommand{\fX}{\mathcal X}

\newcommand{\cP}{\mathcal P}

\DeclareMathOperator{\vol}{vol}
\DeclareMathOperator{\inter}{int}

\title{Separators for intersection graphs of spheres} 
\author{Jacob Fox}
\author{Jonathan Tidor}
\address{Department of Mathematics, Stanford University, Stanford, CA 94305, USA}
\email{jacobfox@stanford.edu}
\address{Department of Mathematics, Princeton University, Princeton, NJ 08544, USA}
\email{jtidor@princeton.edu}

\thanks{Fox was supported by NSF Awards DMS-2452737 and DMS-2154129. Tidor was supported in part by a Stanford Science Fellowship. This material is based in part upon work supported by the National Science Foundation under Grant No. DMS-1928930, while the authors were in residence at the Simons Laufer Mathematical Sciences Institute in Berkeley, California, during the Spring 2025 semester.}
\date{}

\begin{document}
\begin{abstract}
We prove the existence of optimal separators for intersection graphs of balls and spheres in any dimension $d$. One of our results is that if an intersection graph of $n$ spheres in $\R^d$ has $m$ edges, then it contains a balanced separator of size $O_d(m^{1/d}n^{1-2/d})$. This bound is best possible in terms of the parameters involved. The same result holds if the balls and spheres are replaced by fat convex bodies and their boundaries.
\end{abstract}

\maketitle

\section{Introduction}
\label{sec:into}
For a graph $G=(V,E)$, a \emph{balanced separator} is a vertex set $\fX\subseteq V$ such that each connected component of $G\setminus \fX$ contains at most $\tfrac23|V|$ vertices. In 1979, Lipton and Tarjan \cite{LT79} showed that $n$-vertex planar graphs have balanced separators of size $O(\sqrt n)$. There are many different characterizations of planar graphs that each lead to a different extension of the Lipton--Tarjan separator theorem. For example, Wagner's theorem \cite{Wa37} states that a graph is planar if and only if it does not contain $K_5$ or $K_{3,3}$ as a minor. The family of graphs that can be embedded in a fixed surface has a finite list of forbidden minors by the Robertson--Seymour theorem \cite{RS04}. Gilbert, Hutchinson, and Tarjan \cite{GHT84} proved that every $n$-vertex graph which can be embedded in a surface of genus $g$ has a separator of size $O(\sqrt{gn})$. Later, Alon, Seymour, and Thomas \cite{AST90} proved that for every graph $H$, every $n$-vertex graph that does not contain $H$ as a minor has a separator of size at most $O_H(\sqrt{n})$. 

Another important characterization of planar graphs is the Koebe--Andreev--Thurston circle packing theorem, which says that a graph is planar if and only if it is the intersection graph of interior-disjoint disks in the plane. (This means that there is a collection of interior-disjoint disks, one corresponding to each vertex, so that two vertices are connected by an edge if and only if the corresponding disks are tangent.) This leads to another way to try to extend the Lipton--Tarjan separator theorem, to intersection graphs of geometric objects. A \emph{string graph} is an intersection graph of arcwise-connected sets in the plane, so planar graphs are string graphs. The Lipton--Tarjan separator theorem has been extended in several works studying separators for string graphs \cite{FP08,FP10,KoLo24,Lee17,Mat14}, and we now know that every string graph with $m$ edges has a balanced separator of size $O(\sqrt m)$. 

Graphs with a small balanced separator are poor expanders, and one can utilize the separator for many applications, including in fast divide-and-conquer type algorithms for many algorithmically hard graph problems (see, for example, \cite{LT80}), extremal applications (see, for example, \cite{FP14,LRT79}), and enumerative problems (see, for example, \cite{DN10}). 

In this paper, we are interested in finding small separators for intersection graphs of geometric objects in higher dimensions. Miller, Teng, Thurston, and Vavasis \cite{MTTV97} proved a separator theorem for the intersection graph of balls in  $\R^d$ in which no point is in more than $k$ of the balls. Generalizing this, we prove that every intersection graph of $n$ balls in $\R^d$ with $m$ edges has a balanced separator of size $O_d(m^{1/d}n^{1-2/d})$. This bound is best possible in terms of the parameters involved. (Here and throughout the paper, we use the notation $O_d(\cdot)$ to hide a multiplicative constant that depends only on the dimension $d$.)

Proving a separator theorem for intersection graphs of spheres turns out to be more challenging as such graphs can have significantly more complicated combinatorial structure. Recently, Davies, Georgakopoulos, Hatzel, and McCarty \cite{DGHM25} studied this problem under the constraint that the intersection graph is $K_{t,t}$-free, proving that if an intersection graph of $n$ spheres in $\R^d$ is $K_{t,t}$-free, then it has a balanced separator of size $\tilde{O}_d(t^{10}n^{1-1/(2d+8)})$. We first study the problem in general. Despite their more complicated structure, we show that intersection graphs of spheres have the same size separators as intersection graphs of balls.

\begin{theorem}
\label{thm:sphere-separator-main-intro}
An intersection graph of $n$ spheres in $\R^d$ with $m$ edges has a balanced separator of size $O_{d}(m^{1/d}n^{1-2/d})$.
\end{theorem}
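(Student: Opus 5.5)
Our plan is to follow the Miller--Teng--Thurston--Vavasis approach: separate by a random sphere $S$ and take as separator all input spheres that cross $S$. For a sphere $\sigma_i=\partial B(c_i,r_i)$, let $X_i$ be the event that $\sigma_i$ meets $S$. Any component of $G\setminus\fX$ consists of spheres lying entirely inside $S$ or entirely outside $S$. So it suffices to find a distribution on spheres $S$ such that (i) with constant probability each side holds at most $\tfrac23 n$ spheres, and (ii) $\E|\{i: X_i\}| = O_d(m^{1/d}n^{1-2/d})$. The balance condition we obtain from a centerpoint/conformal argument. We map the centers to $S^d$ by stereographic projection, apply a Möbius transformation so that the origin is a centerpoint, and take $S$ to be the preimage of a uniformly random great sphere. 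Möbius maps send spheres to spheres and preserve the intersection graph, so we may work on $S^d$ with spherical caps' boundaries and a random great sphere $H$.

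\textbf{Cost bound.} Split the spheres by radius according to a threshold $\rho$ chosen later. A sphere of (spherical) radius $r$ is cut by a random great sphere with probability $O_d(r)$, which is large only for big spheres. For small spheres we use volume. With $\rho$ fixed, the probability that $H$ meets sphere $i$ is $O(r_i)$, so the expected cost is $\sum_i \min(1, O(r_i))$. This sum has to be bounded by $m$, and this is where spheres differ from balls. For balls, many large balls overlapping forces edges. Spheres, in contrast, can be nested without intersecting: concentric spheres form an independent set with large total radius. Nested families must therefore be handled separately. The key step we would try is to assign each sphere the ball $B_i$ and consider depth. If a point lies in $k$ of the balls $B_i$ whose spheres pairwise miss, those spheres form a chain under containment. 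A great sphere cutting many nested spheres costs a lot, but the chain structure still limits it. A random great sphere through the region crosses a nested chain of radii $r_1<\dots<r_k$ about $\sum r_j$ times in expectation, and this sum can be $\Theta(k)$. We should therefore instead use a random sphere of random radius, so that nested concentric spheres are cut only $O(1)$ times in expectation (a shell-type argument: choose the radius uniformly on a log scale). We would then combine the random-center/random-radius measure with the density bound.

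\textbf{Density bound.} We would prove the analogue of the MTTV ``ply'' lemma with $m$ edges in place of ply. Let $k(x)$ be the number of spheres whose balls contain $x$ and that are ``relevant'' at scale. Then $\int$ of the relevant quantity is bounded using $\sum_x \binom{k}{2}\lesssim m$ plus the contribution of nested chains. Optimizing the cutting distribution over a threshold $t$ gives cost $\lesssim n^{1-1/d}\cdot(\text{average ply})^{1/d}$. Average ply is at most about $m/n$, which yields $n^{1-1/d}(m/n)^{1/d}\cdot n^{-1/d}$; a Hölder step gives exactly $m^{1/d}n^{1-2/d}$. Concretely, we would take a threshold $t=(m/n)^{...}$: spheres in regions of ply above $t$ are put directly into the separator (there are at most $O(m/t)$ of them by edge counting), and the rest are handled by the MTTV bound $O(t^{1/d}n^{1-1/d})$. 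Balancing gives $t=m^{(d-1)/d}... $, which yields the stated bound.

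\textbf{Main obstacle.} The hardest step is making ``ply'' meaningful for spheres. Nested non-intersecting spheres give arbitrarily high ball-ply with no edges. We would handle this by defining the ply of a point $x$ using only spheres passing within distance comparable to their radius of $x$, in the annulus sense. We then show that a random sphere from the MTTV distribution hits a sphere $\sigma$ with probability controlled by this annular ply. Proving that annular ply at $t$ forces $\Omega(t^2)$ edges locally is a packing argument: many spheres of comparable radius meeting a small region must intersect. Spheres of very different radii near $x$ need a dyadic scale decomposition, and summing over scales is where we expect the most technical difficulty.
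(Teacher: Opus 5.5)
Your proposal has a genuine gap at the step you call the main obstacle, and the remedies you sketch do not close it. The problem is the architecture itself. A separator made of \emph{all} input spheres meeting a random sphere $S$ has expected size not controlled by $m$, because chains of pairwise disjoint nested spheres add nothing to $m$ but are crossed wholesale by any $S$ passing through them off-center.

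For example, put $k$ cluster centers on a cubic grid of spacing $3$. Let each cluster consist of $L/2$ pairs, the $j$-th pair being two spheres of radius $1+0.8j/L$ centered at $c$ and $c+\eta e_1$ with $0<\eta<0.8/L$. Then $G$ is a perfect matching, $n=kL$, and the target is $O(n^{1-1/d})$. But a sphere of random radius at the scale of the configuration passes within distance $1/2$ of $\Theta(k^{1-1/d})$ cluster centers in expectation. It meets all $L$ spheres of each such cluster, for a cost of $\Theta(L^{1/d}n^{1-1/d})$. A log-scale random radius only helps for chains concentric with $S$, not for many chains with different centers.

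The same chains refute your packing step. Spheres with a common center and radii $1,1+\epsilon,\dots,1+t\epsilon$ all pass within $t\epsilon$ of any point of the unit sphere, have comparable radii, and are pairwise disjoint. So neither ball-ply nor any ``annular ply'' forces edges, and the count of $O(m/t)$ high-ply spheres is unjustified.

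Even for balls, where that count is valid, the threshold arithmetic is wrong. Optimizing $m/t+t^{1/d}n^{1-1/d}$ gives $m^{1/(d+1)}n^{(d-1)/(d+1)}$, which exceeds $m^{1/d}n^{1-2/d}$ by the factor $(n^2/m)^{1/(d(d+1))}$. Reaching the stated exponent requires a degree-weighted bound $(\sum_S\deg(S)^{1/(d-1)})^{1-1/d}$ followed by H\"older, which your sketch never derives.

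The missing idea is to avoid paying for nested spheres one by one, using the fact that the neighborhood $N(S_0)$ of a single sphere already disconnects its inside from its outside. The paper proceeds as follows.
\begin{enumerate}
\item Set $\Sigma=C_d(\sum_S\deg(S)^{1/(d-1)})^{1-1/d}$ and delete the $\Sigma/4$ highest-degree spheres, so every remaining neighborhood has size at most $\Delta\le\Sigma/4$.
\item Reduce to a family in which no sphere contains $4c_dn$ centers. Let $S_0$ be a minimal sphere containing $4c_dn$ centers. If many spheres surround $S_0$, the median of that chain gives a separator $N(S_0)$ by Lemma~\ref{lem:nested-sphere-separator}. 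Otherwise restrict to the inside of $S_0$, at cost $|N(S_0)|$.
\item Choose $S_{\mathrm{rand}}$ with random radius in $[R,2R]$. Put directly into the separator only the crossing spheres whose cap on $S_{\mathrm{rand}}$ has radius at most $C_d'R(\deg(S)/\Sigma)^{1/(d-1)}$. Their expected number is $O(\sum_S(\deg(S)/\Sigma)^{1/(d-1)})$, a small multiple of $\Sigma$.
\item Handle the remaining crossing spheres greedily: take one of maximal radius $S_i$, add $N(S_i)$, and discard everything meeting its ball. The balls of the $S_i$ are disjoint, so their caps on $S_{\mathrm{rand}}$ are disjoint. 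Since each cap is large relative to $\deg(S_i)$, area packing gives $\sum_i\deg(S_i)\lesssim\Sigma$.
\end{enumerate}
Components trapped inside some $S_i$ are small because of the reduction in step 2. In the example above, this costs $O(1)$ per cluster met instead of $L$.
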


As a corollary, we show that every $K_{t,t}$-free intersection graph of $n$ spheres has a separator of size $O_d(t^{1/d}n^{1-1/d})$ (see Corollary~\ref{cor:ktt-free-separator}). This verifies a strengthening of a conjecture of Davies, Georgakopoulos, Hatzel and McCarty \cite[Conjecture 20]{DGHM25}. Both Theorem~\ref{thm:sphere-separator-main-intro} and the corollary are best possible up to the constant factor depending on the dimension $d$ (see Proposition~\ref{prop:lower-bound}). We also prove a refinement of Theorem~\ref{thm:sphere-separator-main-intro} which gives a separator whose size depends on the degree sequence of the intersection graph. As a corollary, we prove that every $K_{t,t}$-free intersection graph of spheres in $\R^d$ has average degree $O_d(t)$. It follows that $K_{t,t}$-free intersection graphs of spheres in $\R^d$ have chromatic number $O_d(t)$. 

We further study intersection graphs of other geometric objects in Section~\ref{sec:further}. We generalize our results from balls and spheres to fat convex bodies and their boundaries. Unlike the planar case, we show that both the fatness and convexity assumptions are necessary in dimensions three and larger: no non-trivial separator exists for the intersection graph of general convex bodies or general fat bodies.

Finally, as separators have many algorithmic applications, in Section~\ref{sec:alg} we show that there are efficient randomized algorithms for finding such separators of intersection graphs of balls or spheres. 

\section{Separators for intersection graphs of balls}
Let $\fB$ be a collection of balls in $\R^d$. The \emph{intersection graph} of $\fB$, denoted $G(\fB)$, is the graph with vertex set $\fB$ where two balls are adjacent if they intersect. The collection $\fB$ is \emph{$k$-ply} if every point of $\R^d$ is contained in at most $k$ balls. Miller, Teng, Thurston, and Vavasis \cite{MTTV97} proved the following separator theorem for intersection graphs of balls.

\begin{theorem}[\cite{MTTV97}]
\label{thm:ball-separator-ply}
Let $\fB$ be a $k$-ply collection of $n$ balls in $\R^d$. Then $G(\fB)$ has a balanced separator of size $O_d(k^{1/d}n^{1-1/d})$.
\end{theorem}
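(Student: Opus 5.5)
We prove Theorem~\ref{thm:ball-separator-ply} by the sphere-separator method of Miller, Teng, Thurston, and Vavasis: find a sphere $S$ in $\R^d$ that splits the balls evenly and meets few of them. The separator $\fX$ will be the set of balls meeting $S$. Every other ball lies strictly inside or strictly outside $S$, and an inside ball cannot meet an outside ball. So it suffices to find $S$ such that at most $\tfrac23 n$ balls lie strictly on each side and $|\fX| = O_d(k^{1/d}n^{1-1/d})$.

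\emph{Step 1: locate the centers.} Lift $\R^d$ to the unit sphere $\mathbb S^d\subset\R^{d+1}$ by stereographic projection, and record each ball by its center. By a Brouwer fixed-point (or degree) argument, there is a conformal map of $\mathbb S^d$ after which the origin is a centerpoint of the images of the centers. Concretely, every closed hemisphere then contains at least $n/(d+2)$ of these points. Pick a uniformly random great sphere $C$ of $\mathbb S^d$. Its two open hemispheres each contain at most $(1-\tfrac1{d+2})n$ centers. Pull $C$ back to a sphere (or hyperplane) $S$ in $\R^d$. This gives balance with constant $1-\tfrac1{d+2}$ rather than $\tfrac23$. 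To reach $\tfrac23$, we either use the standard fact that constant-factor balanced separators can be iterated $O_d(1)$ times at the same cost, or we take $\fX$ to be the balls meeting $S$ and bound the components of the remainder directly.

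\emph{Step 2: expected number of balls meeting $S$.} Conformal maps send balls to balls (up to exchanging inside and outside), so each ball becomes a spherical cap on $\mathbb S^d$. A cap of angular radius $r$ meets a random great sphere with probability $O_d(r)$ for small $r$, and trivially with probability at most $1$. The $k$-ply condition is invariant under the conformal map, so the caps still cover each point at most $k$ times. Hence $\sum_i \vol(\text{cap}_i)\le k\vol(\mathbb S^d)$, which gives $\sum_i \min(1,r_i)^d=O_d(k)$. We bound the expectation using H\"older's inequality, or equivalently by splitting into large and small caps:
\[
\sum_i \min(1,r_i)\le n^{1-1/d}\Big(\sum_i \min(1,r_i)^d\Big)^{1/d}=O_d(k^{1/d}n^{1-1/d}).
\]
So some choice of $C$ gives a sphere meeting at most this many balls. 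A separator of this size is balanced with constant $1-\tfrac1{d+2}$ on both sides. The expectation bound holds for every random great circle, so we have full freedom in choosing the centering map.

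\emph{Main obstacle.} The delicate step is the balance claim. We balance by \emph{centers}, but a ball whose center lies on one side and which does not meet $S$ lies entirely on that side, so counting balls on a side is the same as counting centers on that side. The genuinely hard part is the existence of the conformal centering map in Step~1. We argue as follows. For each point $x$ of the open unit ball $\mathbb B^{d+1}$ there is a conformal map $\phi_x$ that moves the origin to $x$. Consider the map sending $x$ to the centerpoint-type center of mass of the pushed-forward points. This is a continuous self-map of the ball that points outward on the boundary, so by a degree argument it has a zero. At that zero, the origin is the center of mass of the images. A center of mass at the origin yields a weaker but still constant-factor hemisphere bound, which suffices up to $O_d(1)$ factors.
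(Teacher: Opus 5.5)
Your Step 2 (caps, the $k$-ply volume bound $\sum_i\min(1,r_i)^d=O_d(k)$, and H\"older) is fine. So is reducing balance to counting mapped centers on each side. The gap is in the centering step, on which every balance claim rests.

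In your ``main obstacle'' paragraph, the degree argument produces a conformal map after which the \emph{center of mass} of the images is at the origin. You then assert that this still gives a constant-factor bound for every hemisphere, and it does not. Take $p_1=e_{d+1}$ and $p_2,\dots,p_n$ on the small sphere $\{x\in\mathbb S^d: x_{d+1}=-1/(n-1)\}$ with horizontal components cancelling. Then $\sum_i p_i=0$, but the open lower hemisphere contains $n-1$ of the points. You pick the great sphere $C$ that meets the fewest caps, and nothing stops that $C$ from being an unbalanced one like this. So ``balanced with constant $1-\tfrac1{d+2}$'' is unproved. Reading ``centerpoint-type'' literally does not help either: a centerpoint is not unique, and there is no evident continuous choice of one, so there is no continuous self-map of the ball for a degree argument.

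No topology is needed. Take a genuine centerpoint $z\in\R^{d+1}$ of the lifted centers, so that every closed half-space containing $z$ contains at least $n/(d+2)$ of them. M\"obius transformations of $\mathbb S^d$ are the boundary actions of the projective automorphisms of the unit ball (the Klein model). Such a map sends hyperplane sections $H\cap\mathbb S^d$ to hyperplane sections, preserving sides, and can be chosen to send $z$ to $0$. Hyperplanes through $z$ then go to hyperplanes through $0$, and the origin becomes a centerpoint of the images. This is what the rotation-plus-dilation in MTTV does.

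Alternatively, keep your center-of-mass normalization but claim balance only for a \emph{random} great sphere. Write $N_\pm(u)$ for the number of $i$ with $\pm\langle p_i,u\rangle>0$. If $\sum_i p_i=0$, then $\min(N_+(u),N_-(u))\ge\sum_i\max(\langle p_i,u\rangle,0)=\tfrac12\sum_i|\langle p_i,u\rangle|$. This quantity is at most $n/2$ and has mean $\Omega_d(n)$, so it is $\Omega_d(n)$ with probability $\Omega_d(1)$. Combine this with Markov's inequality for the number of caps met.

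With either repair your argument is the original MTTV proof. The paper avoids lifting altogether and derives the statement from Theorem~\ref{thm:ball-separator-lp}. It takes the smallest ball containing $5^{-d}n$ centers and a random concentric sphere of radius in $[1,2]$. It then bounds the expected cost by shrinking balls radially into $B(2)$ and applying $\sum r(B)^d/p(B)\le 2^d$ together with H\"older.
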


A more recent proof of this result was given by Har-Peled \cite{Har11}. Using his techniques, we are able to give a more refined version of the Miller--Teng--Thurston--Vavasis separator theorem. (Similar techniques also appeared in an earlier work of Smith and Wormald \cite{SW98}.)

Given a collection of balls $\fB$, for each point $x\in\fB$, the \emph{ply} of $x$, denoted $p(x)$, is the number of balls in $\fB$ which contain $x$. For each ball $B\in\fB$, the \emph{ply} of $B$, denoted $p(B)$, is the maximum over $p(x)$ over all points $x\in B$. With this definition we strengthen Theorem~\ref{thm:ball-separator-ply} as follows.

\begin{theorem}
\label{thm:ball-separator-lp}
Let $\fB$ be a collection of $n$ balls in $\R^d$. Then $G(\fB)$ has a balanced separator of size 
\[O_d\paren{\paren{\sum_{B\in\fB}p(B)^{\tfrac1{d-1}}}^{1-\tfrac1d}}.\]
\end{theorem}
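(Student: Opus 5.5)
We follow Har-Peled's random-sphere approach. Let $S$ be a random sphere; every ball of $\fB$ that meets $S$ goes into the separator. Balls strictly inside $S$ and balls strictly outside $S$ lie in different components of $G(\fB)\setminus\fX$, so we need two properties. First, with constant probability, both the inside and the outside of $S$ contain at most $\tfrac23 n$ balls. Second, the expected number of balls meeting $S$ is $O_d\bigl((\sum_B p(B)^{1/(d-1)})^{1-1/d}\bigr)$. Since this expectation is at most $n$, Markov's inequality then gives a single sphere with both properties.

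\emph{Choosing the sphere.} Let $c_B$ be the center of each ball $B$. We take the standard construction: a smallest ball $D$ containing at least $n/C_d$ of the centers, with $C_d$ a large constant depending on $d$. By a covering argument, any ball of comparable radius contains $O_d(n/C_d)$ centers. We normalize $D$ to have radius $1$ and center $0$, and let $S$ be a sphere centered at $0$ with radius $r$ uniform in $[1,2]$. The standard analysis shows that the inside and outside of $S$ each contain at most $\tfrac23 n$ of the balls. Two observations feed into this. Balls of radius much larger than $1$ that meet $S$ are charged to the separator anyway. A ball lies entirely inside or entirely outside $S$ according to where its center lies, unless it meets $S$.

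\emph{Bounding the expected separator size.} Only the radius of $S$ is random, so a ball $B$ of radius $\rho$ meets $S$ with probability $O(\min(1,\rho))$. It meets $S$ at all only if it lies within distance $2$ of the origin, so the relevant balls satisfy $|c_B|\le 2+\rho$. We split $\fB$ into two groups.
\begin{itemize}
\item \emph{Large balls,} with $\rho\ge t$ for a threshold $t$ chosen later. Each of them contains a point of $S$. We bound how many there are by a ply argument: cover the annulus $1\le|x|\le 2$ by $O_d(t^{-d})$ cubes of side $t/\sqrt d$. Each large ball meeting the annulus contains a whole such cube, so its ply $p(B)$ is at least the number of large balls containing that cube.
\item \emph{Small balls,} with $\rho<t$. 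Their expected contribution is $\sum \rho_B$.
\end{itemize}

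\emph{The main obstacle} is obtaining the exact $\ell^{1/(d-1)}$ form of the bound. We aim to prove
\[\E|\fX|\lesssim \sum_{B\text{ relevant}} \min\bigl(\rho_B,\, 1\bigr)\]
and then control $\sum\rho_B$ by volume and ply. Each small ball of radius $\rho$ has volume $\asymp\rho^d$ and lies in a region of volume $O(1)$. Writing $\rho_B = (\rho_B^d)^{1/d}$, we apply H\"older with exponents chosen so that volumes are weighted by ply. The integral $\int p(x)\,dx$ over the region equals $\sum\vol(B)$, and each point $x$ has $p(x)\le p(B)$ for every $B\ni x$. From this we try to show
\[\sum_B \rho_B \le \Bigl(\sum_B p(B)^{1/(d-1)}\Bigr)^{(d-1)/d}\Bigl(\sum_B \rho_B^d\, p(B)^{-1}\Bigr)^{1/d},\]
which follows from H\"older as $\rho_B = p(B)^{1/d}\rho_B\cdot p(B)^{-1/d}$. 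The key estimate is then
\[\sum_B \vol(B)/p(B)\le\vol(\text{region})=O_d(1).\]
This holds because $\sum_{B\ni x}1/p(B)\le 1$ for every $x$: each of the $p(x)$ balls containing $x$ has $p(B)\ge p(x)$. Large balls are treated the same way with $\rho$ replaced by $1$ after truncation, in which case the truncated ball meets the annulus in volume $\gtrsim 1$. This avoids a separate threshold.

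The delicate part is checking that the truncation $\min(\rho,1)$ is compatible with the inequality $\sum\vol(B\cap A)/p(B)\le\vol(A)$ on the enlarged annulus $A$. The point to verify is that a ball of radius $\rho\le 1$ meeting the annulus lies in the $1$-neighborhood of the annulus, and that a larger ball intersects this neighborhood in volume $\Omega_d(1)$. Once that holds, the bound follows directly.
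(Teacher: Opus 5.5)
\textbf{Gap: the balance claim is false as stated, and a boosting step is missing.} Your bound on $\E|\fX|$ is sound. The problem is the claim that ``the standard analysis'' makes the inside and the outside of $S$ each contain at most $\tfrac23 n$ balls. With $D$ the smallest ball containing $n/C_d$ centers, the guarantees (valid for every $r\in[1,2]$) are:
\begin{itemize}
\item at most $n-\lceil n/C_d\rceil$ balls lie outside $S$;
\item at most about $K_d n/C_d$ balls lie inside $S$, by covering $B(2)$ with $K_d$ balls of radius $<1$, each holding fewer than $n/C_d$ centers.
\end{itemize}
Making both at most $\tfrac23 n$ needs $C_d\le 3$ and $C_d\ge\tfrac32K_d>3$, which is impossible.

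This is not just a weakness of the estimate. Take $n/C_d$ tiny disjoint balls in a tight cluster and, far away, a path of $(1-1/C_d)n$ unit balls with centers $3/2$ apart. Then $D$ sits at the cluster, and for every admissible $r$ the whole path is a single component outside $S$. So one sphere only gives a $(1-1/C_d)$-balanced separator. This holds deterministically once $C_d$ is large compared to $K_d$; randomness and Markov play no role in the balance.

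The missing ingredient is the boosting argument of Lemma~\ref{lem:balancing-separators}: re-apply the construction to the largest surviving component, repeatedly. This works because the bound is hereditary. For $\fB_1\subseteq\fB$ we have $p_{\fB_1}(B)\le p_{\fB}(B)$, so $\sum_{B\in\fB_1}p_{\fB_1}(B)^{1/(d-1)}\le\sum_{B\in\fB}p_{\fB}(B)^{1/(d-1)}$. After $\lceil\log_{1-1/C_d}(2/3)\rceil=O_d(1)$ rounds every component has at most $\tfrac23 n$ vertices, at only a constant-factor cost.

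\textbf{The size bound: a valid variant of the paper's argument.} Your truncation at radius $1$ plus packing in an enlarged region works. The paper instead replaces each relevant $B$ by the ball $B'\subseteq B\cap B(2)$ whose diameter is the radial segment $\ell_B\cap B\cap B(2)$. Then $\Pr[B\in\fX]\le 2r(B')$ exactly, the packing takes place inside $B(2)$, and no truncation or case split is needed.

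One slip in your version: a ball of radius $\rho\le 1$ meeting the annulus lies in the $2$-neighbourhood of the annulus (its diameter is $2\rho$), not the $1$-neighbourhood. Work in $B(4)$ instead. Small balls lie inside $B(4)$. A ball of radius $\rho\ge1$ containing a point $y$ with $|y|\le 2$ contains a unit ball centered within distance $1$ of $y$, hence inside $B(4)$. This gives $\sum_B \min(\rho_B,1)^d/p(B)\le 4^d$, and your H\"older step finishes.
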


Since every ball has ply at most one more than its degree in the intersection graph, we have the following corollary which bounds the size of the separator in terms of the degree sequence of the graph, or simply its number of edges.

\begin{corollary}
\label{cor:ball-separator-edge}
Let $\fB$ be a collection of $n$ balls in $\R^d$. Then $G(\fB)$ has a balanced separator of size
\[O_d\paren{\paren{\sum_{B\in\fB}\deg(B)^{\tfrac1{d-1}}}^{1-\tfrac1d}}.\]
In particular, if $G(\fB)$ has $m$ edges, then it has a balanced separator of size $O_d(m^{1/d}n^{1-2/d})$.
\end{corollary}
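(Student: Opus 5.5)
The plan is to derive both bounds directly from Theorem~\ref{thm:ball-separator-lp}, using two facts: the ply of a ball is controlled by its degree, and the resulting sum can be bounded by H\"older's inequality.

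\textbf{Ply versus degree.} Fix $B\in\fB$ and a point $x\in B$ with $p(x)=p(B)$. Every ball of $\fB$ containing $x$ meets $B$ at $x$. So each such ball is either $B$ itself or a neighbour of $B$ in $G(\fB)$, which gives
\[p(B)\le \deg(B)+1.\]
Since $t\mapsto t^{1/(d-1)}$ is increasing, Theorem~\ref{thm:ball-separator-lp} yields a separator of size
\[O_d\paren{\paren{\sum_{B\in\fB}(\deg(B)+1)^{\tfrac1{d-1}}}^{1-\tfrac1d}}.\]

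\textbf{Removing the $+1$.} For $d\ge 2$ the exponent $1/(d-1)$ lies in $(0,1]$, so the function is subadditive:
\[(\deg B+1)^{1/(d-1)}\le \deg(B)^{1/(d-1)}+1.\]
The sum is therefore at most $\sum_B\deg(B)^{1/(d-1)}+n$. This additive $n$ is the one point needing care. Isolated balls contribute nothing to connectivity, so discard them: the components of $G\setminus\fX$ that they form are singletons. Balancedness is measured against $|V|$, and a separator for the non-isolated part $G'$ is still a separator for $G$, since the components of $G'\setminus\fX$ have at most $\tfrac23|V(G')|\le\tfrac23|V|$ vertices. For non-isolated balls $\deg(B)\ge1$, so $1\le\deg(B)^{1/(d-1)}$, and the extra $+n'$ at most doubles the sum. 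This gives the first bound. (The case $d=1$ is degenerate, and I would treat it separately or ignore it, as the paper does implicitly.)

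\textbf{The edge bound.} Apply H\"older (or concavity/Jensen) with exponent $1/(d-1)\le 1$:
\[\sum_B\deg(B)^{\tfrac1{d-1}}\le n^{1-\tfrac1{d-1}}\paren{\sum_B\deg B}^{\tfrac1{d-1}}=n^{\tfrac{d-2}{d-1}}(2m)^{\tfrac1{d-1}}.\]
Raising to the power $(d-1)/d$ gives $O(n^{(d-2)/d}m^{1/d})=O_d(m^{1/d}n^{1-2/d})$.

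The only mildly delicate step is handling the additive $+1$, that is, isolated vertices. Everything else is routine.
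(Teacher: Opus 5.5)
Your proof is correct and follows the paper's route: bound $p(B)\le\deg(B)+1$, feed this into Theorem~\ref{thm:ball-separator-lp}, and finish with the same H\"older estimate $\sum_{B}\deg(B)^{1/(d-1)}\le(2m)^{1/(d-1)}n^{(d-2)/(d-1)}$. You also absorb the additive $+1$ explicitly, by discarding isolated balls so that $(\deg(B)+1)^{1/(d-1)}\le 2\deg(B)^{1/(d-1)}$ for the remaining ones; the paper leaves this step implicit, and you handle it correctly.
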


The second claim follows from the first via an application of H\"older's inequality since
\[\sum_{B\in\fB}\deg(B)^{\tfrac1{d-1}}\leq\paren{\sum_{B\in\fB}\deg(B)}^{\tfrac1{d-1}}\paren{\sum_{B\in\fB}1}^{\tfrac{d-2}{d-1}}=(2m)^{\tfrac1{d-1}}n^{\tfrac{d-2}{d-1}}.\]

To construct a balanced separator, we first construct a separator with slightly worse quantitative properties and then iterate a bounded number of times to find the desired separator.

\begin{definition}
For a graph $G=(V,E)$ and $c<1$, a \emph{$c$-balanced separator} is a vertex set $\fX\subseteq V$ such that each connected component of $G\setminus \fX$ contains at most $c|V|$ vertices. 
\end{definition}

\begin{lemma}
\label{lem:balancing-separators}
Suppose graph $G=(V,E)$ is such that every induced subgraph has a $c$-balanced separator of size at most $s$. Then $G$ has a balanced separator of size at most $\lceil \log_c(2/3)\rceil s$.
\end{lemma}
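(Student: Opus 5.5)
We build the separator greedily by repeatedly cutting the largest remaining piece, and track the size of that piece. Set $k=\lceil \log_c(2/3)\rceil$; if $c\le 2/3$ then $k\le 1$ and we simply apply the hypothesis to $G$ itself, so assume $2/3<c<1$. We construct sets $\emptyset=\fX_0\subseteq \fX_1\subseteq\cdots\subseteq\fX_k$ with $|\fX_i|\le is$, maintaining the invariant that every connected component of $G\setminus\fX_i$ has at most $c^i|V|$ vertices. Since $c^k\le 2/3$, the set $\fX_k$ is then a balanced separator of size at most $ks$.

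At step $i$, assume the invariant holds for $\fX_{i}$. If every component of $G\setminus \fX_i$ already has at most $c^{i+1}|V|$ vertices, put $\fX_{i+1}=\fX_i$. Otherwise we cut the large components. If $C$ is a component with more than $c^{i+1}|V|$ vertices, apply the hypothesis to the induced subgraph $G[C]$ to obtain a $c$-balanced separator $\fS_C\subseteq C$ with $|\fS_C|\le s$. Every component of $G[C]\setminus\fS_C$ then has at most $c|C|\le c^{i+1}|V|$ vertices. The key point is that at most one component can exceed $c^{i+1}|V|$, since two such components would together contain more than $2c^{i+1}|V|$ vertices. So we need to cut at most one component per step, and $\fX_{i+1}=\fX_i\cup\fS_C$ has size at most $(i+1)s$.

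It remains to check that the components of $G\setminus\fX_{i+1}$ are exactly the components of $G[C]\setminus\fS_C$ together with the untouched components of $G\setminus\fX_i$. This holds because $\fS_C\subseteq C$ and there are no edges between distinct components of $G\setminus\fX_i$.

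The main obstacle is the claim that only one component is large, since $2c^{i+1}|V|$ need not exceed $|V|$ once $c^{i+1}<1/2$. We handle it as follows.
\begin{itemize}
\item If $c^{i+1}\ge 1/2$, the counting argument above works.
\item Otherwise $c^{i+1}<1/2\le 2/3$, so we would already be finished: we may stop at the first index $i$ with $c^i\le 2/3$, and this index is at most $k$.
\item Before that index we have $c^{i+1}>2/3\cdot c\ge 2/3\cdot 2/3>1/3$. This bound alone only guarantees at most two large components.
\end{itemize}
To keep one cut per step, we instead use a threshold $|V|/2$ instead of $c^{i+1}|V|$ near the end. Concretely, at each step we cut only the unique component of size more than $|V|/2$, if one exists. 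This shrinks the largest component by a factor $c$, and we stop once all components have at most $\max(c^{i}, 2/3)|V|$ vertices. Components of size at most $|V|/2$ never need cutting, and a component larger than $|V|/2$ is unique. Hence after at most $k$ steps every component has at most $\tfrac23|V|$ vertices, which is the careful bookkeeping the proof must verify.
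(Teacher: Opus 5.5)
Your final argument (at each step cut only the unique component with more than $|V|/2$ vertices, so that this component shrinks by a factor $c$ per cut while every other component stays below $|V|/2$) is correct and is essentially the paper's proof, which repeatedly separates the largest remaining piece $V_i$ and notes that all other pieces have at most $\tfrac12|V_i|\le\tfrac23|V|$ vertices. Discard the earlier threshold-$c^{i+1}|V|$ scheme; in particular the second bullet is wrong as stated, since $c^{i+1}<1/2$ does not give $c^i\le 2/3$ (e.g.\ $c=0.7$, $i=1$). State the invariant as: after $i$ cuts every component has at most $\max(c^i,\tfrac12)|V|$ vertices, so $k$ cuts suffice.
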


\begin{proof}
Set $V_1=V$. Given a vertex set $V_i$, let $\fX_i$ be a $c$-balanced separator of $G[V_i]$ of size at most $s$. Define $V_{i+1}$ to be the set of vertices comprising the largest connected component of $G[V_i]\setminus\fX_i$. By definition, $|V_{i+1}|\leq c|V_i|$. Furthermore, each other connected components of $G[V_i]\setminus\fX_i$ contains at most $\tfrac 12|V_i|\leq\tfrac23|V|$ vertices. Halt the process at the first $i=r$ satisfying $|V_{r+1}|\leq\tfrac23|V|$, so $r \leq \lceil \log_c(2/3)\rceil$. Defining $\fX=\fX_1\cup\cdots\cup\fX_r$, we have $|\fX|\leq rs\leq \lceil \log_c(2/3)\rceil s$. Furthermore, every connected component of $G\setminus\fX$ is either equal to $V_{r+1}$, or is a connected component of $G[V_i]\setminus\fX_i$ other than $V_{i+1}$ for some $i$. Since all of these sets have size at most $\tfrac23|V|$, this implies that $\fX$ is a balanced separator for $G$.
\end{proof}

\begin{proof}[Proof of Theorem~\ref{thm:ball-separator-lp}]
We will give a randomized procedure that produces a $(1-5^{-d})$-balanced separator $\fX$ for $G(\fB)$ of the desired size. By Lemma~\ref{lem:balancing-separators}, losing another constant factor depending only on $d$, this suffices to prove the desired result.

Let $B_0$ be the ball in $\R^d$ with smallest radius that contains at least $5^{-d}n$ centers of balls of $\fB$. ($B_0$ is not necessarily an element of $\fB$.) By perturbing $B_0$ slightly, one can see that $B_0$ contains exactly $\ceil{5^{-d}n}$ centers. After translating and rescaling the problem appropriately, assume that $B_0=B(1)$, the origin-centered ball of radius 1.

For some $r\in[1,2]$, let $\fX$ be the set of balls $B\in\fB$ which intersect the origin-centered sphere of radius $r$. We claim that for any $r\in[1,2]$, the set $\fX$ is a $(1-5^{-d})$-balanced separator for $G(\fB)$. To see this, note that each connected component of $\fB\setminus \fX$ consists of balls which are either entirely inside $B(r)$ or entirely outside $B(r)$. Since $B(1)\subseteq B(r)$ contains $\ceil{5^{-d}n}$ centers, there are at most $(1-5^{-d})n$ balls entirely outside $B(r)$. Now $B(2)\supseteq B(r)$ can be covered by $4^d$ translates of $B(1)$ which, by the minimality of $B_0$, each contain at most $\ceil{5^{-d}n}$ centers. Thus $B(2)$ contains at most $4^d\ceil{5^{-d}n}\leq (1-5^{-d})n$ centers, so there are at most this many balls entirely inside $B(r)$.

We now show that there is a choice of $r\in[1,2]$ so that $\fX$ is small. To do this, pick $r\in[1,2]$ uniformly at random. We now bound the expected size of $\fX$. We do this by modifying $\fB$ by shrinking those balls that intersect $B(2)$ so that they lie entirely in $B(2)$. For each $B\in\fB$ which intersects $B(2)$, let $\ell_B$ be the line that passes through the origin and the center of $B$. Let $B'$ be the ball contained in $B(2)$ whose center lies on $\ell_B$ and which satisfies $\ell_B\cap B'=\ell_B\cap B\cap B(2)$. (See Figure~\ref{fig:constructing-B'}.) Define $\fB'$ to be the set of these balls $B'$ (the balls in $\fB$ that do not intersect $B(2)$ do not have any corresponding ball in $\fB'$).

\begin{figure}
\centering
\begin{tikzpicture}[scale=1]
\clip (-2.01,-2.01) rectangle + (5.51,4.01);

\filldraw[fill=lightgray] (0,0) circle (2);

\draw(-3*0.9397,-3*0.342) -- (5*0.9397,5*0.342);
\draw[line width=1.2pt] (1.2*0.9397,1.2*0.342) -- (2*0.9397,2*0.342);

\draw(2.3*0.9397,2.3*0.342) circle (1.1);
\draw(1.6*0.9397,1.6*0.342) circle (0.4);

\node[] at (0.3*0.9397,0.3*0.342) [below] {$\ell_B$};
\node[] at (0,2+0.03) [below] {$B(2)$};
\node[] at (2.3*0.9397,2.3*0.342+1.1+0.03) [below] {$B$};
\node[] at (1.6*0.9397,1.6*0.342+0.4+0.07) [below] {$B'$};

\node[fill=black, circle, inner sep=1pt] at (0,0) {};
\node[fill=black, circle, inner sep=1pt] at (1.6*0.9397,1.6*0.342) {};
\node[fill=black, circle, inner sep=1pt] at (2.3*0.9397,2.3*0.342) {};
\end{tikzpicture}
\caption{\label{fig:constructing-B'}
Constructing $\fB'$ from $\fB$.}
\end{figure}

Since each ball $B'\in\fB'$ lies inside the corresponding ball $B\in \fB$, clearly we have $p_{\fB'}(B')\leq p_{\fB}(B)$. Write $p(x)$ for the number of balls of $\fB'$ which contain the point $x$. Then $p_{\fB'}(B)=\max_{x\in B}p(x)$. We now have
\begin{align*}
\vol(B(2))
&\geq \vol\paren{\bigcup_{B\in\fB'}B}
=\int_{\bigcup_{B\in\fB'}B}p(x)\frac{\mathrm dx}{p(x)}
=\sum_{B\in\fB'}\int_B \frac{\mathrm dx}{p(x)}\\
&\geq\sum_{B\in\fB'}\int_B \frac{\mathrm dx}{p_{\fB'}(B)}
=\sum_{B\in\fB'}\frac{\vol(B)}{p_{\fB'}(B)}.
\end{align*}
Thus, dividing by $\vol(B(1))$, we conclude that
\[2^d\geq\sum_{B\in\fB'}\frac{r(B)^d}{p_{\fB'}(B)}.\]

Now for each $B\in\fB$, the probability that $B$ is in the separator is at most the length of the interval $\ell_B\cap B\cap B(2)$. In particular, the probability is 0 if $B$ does not intersect $B(2)$. Now if $B$ does intersect $B(2)$, it corresponds to a ball $B'\in\fB'$ so that $\ell_B\cap B\cap B(2)=\ell_B\cap B'$ is an interval of length $2r(B')$. Thus by H\"older's inequality we conclude
\begin{align*}
\E[|\fX|]
&\leq 2\sum_{B\in\fB'} r(B)
\leq 2 \paren{\sum_{B\in\fB'}\frac{r(B)^d}{p_{\fB'}(B)}}^{1/d}\cdot \paren{\sum_{B\in\fB'}p_{\fB'}(B)^{1/(d-1)}}^{(d-1)/d}\\
&\leq 4\paren{\sum_{B\in\fB'} p_{\fB'}(B)^{1/(d-1)}}^{(d-1)/d}
\leq 4\paren{\sum_{B\in\fB} p_{\fB}(B)^{1/(d-1)}}^{(d-1)/d}.\qedhere
\end{align*}
\end{proof}

\begin{remark}
\label{rem:degeneracy}
Clearly Theorem~\ref{thm:ball-separator-lp} implies Theorem~\ref{thm:ball-separator-ply}. We point out that even the weakest version of Corollary~\ref{cor:ball-separator-edge} is sufficient to imply Theorem~\ref{thm:ball-separator-ply}. To see this, let $\fB$ be a $k$-ply collection of balls in $\R^d$. Then the intersection graph $G(\fB)$ is $O_d(k)$-degenerate. (Recall that a graph is $D$-degenerate if there exist an ordering of the vertices so that each vertex is adjacent to at most $D$ vertices which appear earlier in the ordering.) To see this, order the balls by radius. One can check that any ball $B_j$ which intersects $B_i$ and has radius satisfying $r_j\geq r_i$ must occupy at least a $(1/3)^d$-fraction of $B'_i$, defined to be the dilate of $B_i$ by a factor of 3. The $k$-ply assumption shows that there are at most $3^dk$ such balls $B_j$. Thus $G(\fB)$ has at most $3^dkn$ edges so Theorem~\ref{thm:ball-separator-ply} follows from Corollary~\ref{cor:ball-separator-edge}.
\end{remark}

\section{Separators for intersection graphs of spheres}

The main difficulty in generalizing the results of the previous section to intersection graphs of spheres is that the notion of ply is not useful for sphere intersection graphs. In particular, one can nest spheres so that the corresponding collection of balls would have large ply yet the sphere intersection graph has few or no edges. Furthermore, there are dense graphs (in particular, complete bipartite graphs) that are intersection graphs of spheres but are $2$-ply in the sense that no point lies on more than two spheres. (To see this, take two nested families of spheres that are arranged so that every sphere from one family intersects every sphere from the other family.) This shows that no analogue of the argument in Remark~\ref{rem:degeneracy} applies to sphere intersection graphs. Despite these difficulties, we show that separators of the same size exist.

\begin{theorem}
\label{thm:sphere-separator-main}
Let $\fS$ be a collection of $n$ spheres in $\R^d$. Then $G(\fS)$ has a balanced separator of size at most
\[O_d\paren{\paren{\sum_{S\in\fS}\deg(S)^{\tfrac1{d-1}}}^{1-\tfrac1d}}.\]
\end{theorem}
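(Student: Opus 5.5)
The plan is to run the random-sphere argument from the proof of Theorem~\ref{thm:ball-separator-lp}, now with degrees instead of plies. By Lemma~\ref{lem:balancing-separators} it suffices to find a $c_d$-balanced separator of the required size, for some constant $c_d<1$, in every induced subgraph. Since induced subgraphs of $G(\fS)$ are again sphere intersection graphs with smaller degrees, we only need to treat $G(\fS)$ itself. We normalize by a smallest ball $B_0=B(1)$ that contains a constant fraction of the configuration. We then choose $r\in[1,2]$ (and, if needed, also a centre $z\in B(1/2)$) uniformly at random, and let $\fX$ be the set of spheres meeting the sphere $\partial B_z(r)$. Every sphere not in $\fX$ lies strictly inside $B_z(r)$ or strictly outside it. Two intersecting spheres lie on the same side, so each component of $G(\fS)\setminus\fX$ is ``inside'' or ``outside''.

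\textbf{Balance.} The first obstacle is that, unlike balls, a sphere with centre in $B(1)$ may enclose $B(2)$, so counting centres no longer controls the outside part. I would therefore choose $B_0$ as the smallest ball that \emph{contains} at least $5^{-d}n$ spheres entirely. Then at least $5^{-d}n$ spheres lie inside $B(r)$, so the outside part has at most $(1-5^{-d})n$ spheres. For the inside part, cover $B(2)$ by $O_d(1)$ balls of radius $1$ such that every set of diameter at most $1$ lies in one of them. Minimality of $B_0$ then bounds the number of spheres of radius at most $1/2$ inside $B(2)$ by $O_d(1)\cdot 5^{-d}n$; tuning the constants $5^{-d}$ gives a $(1-c_d)$ bound. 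The spheres of radius in $[1/2,2]$ lying in $B(2)$ are not controlled by minimality. For these I would show that they are either few, or else, being large relative to $B(r)$, they meet $\partial B_z(r)$ with probability bounded below. In the latter case they are charged to the separator: a sphere of radius at least $1/2$ meeting $B(2)$ contributes at most $1$ to $\E|\fX|$, and the number of such spheres with small degree is handled as below.

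\textbf{Size.} For a sphere $S$ meeting $B(2)$, the probability that $S\in\fX$ is at most the length $w(S)$ of the radial range $\{|x-z|:x\in S\cap B(2)\}$. This is at most $\min(2\rho(S),2)$, and it is small for spheres nearly concentric with $z$, which is why a random centre helps. By the same H\"older step as before,
\[\E|\fX|\le\sum_S w(S)\le\paren{\sum_S\frac{w(S)^d}{\deg(S)+1}}^{1/d}\paren{\sum_S(\deg(S)+1)^{\tfrac1{d-1}}}^{\tfrac{d-1}{d}},\]
so it remains to prove the packing inequality $\sum_S w(S)^d/(\deg(S)+1)=O_d(1)$, after averaging over $z$ if necessary.

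\textbf{Main obstacle: the packing inequality.} This is where ply must be replaced by degree. I would attach to each sphere $S$ a region $R_S\subseteq B(3)$ of volume $\gtrsim_d w(S)^d$: a ball of radius about $w(S)/4$ placed next to a point of $S$ in the radial direction, so that $R_S$ straddles $S$ transversally. The key geometric claim is that if $R_S\cap R_{S'}$ is large relative to the smaller of the two regions, then $S$ and $S'$ intersect. The reason is that a sphere that cuts a small ball near its ``equator'' cannot avoid another sphere cutting the same ball unless the two are nested, and nesting is ruled out by the radial-width normalization. Granting this claim, every point is covered by at most about $\deg(S)+1$ regions of comparable or larger size near $S$. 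The integral $\int dx/p(x)$ argument from the proof of Theorem~\ref{thm:ball-separator-lp}, run with these regions and ordered by size as in Remark~\ref{rem:degeneracy}, then gives $\sum_S \vol(R_S)/(\deg(S)+1)=O_d(\vol B(3))$. I expect making this transversality claim rigorous, in particular controlling nested spheres of nearly equal radius, to be the hardest step. If it fails for a fixed centre, I would average over the random centre $z$, which makes near-concentric configurations contribute little to $w(S)$.
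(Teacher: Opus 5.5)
There is a genuine gap, and it is the nesting issue you flagged yourself. The packing inequality $\sum_S w(S)^d/(\deg(S)+1)=O_d(1)$ is false. More fundamentally, the set of \emph{all} spheres meeting a random sphere $\partial B_z(r)$ can have size $\Theta(n)$ even when every degree is $1$.

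Here is a configuration. Take $L$ huge and spheres centred at $(-L,0,\dots,0)$ with radii $L+t_j$, for distinct $t_j\in(1/4,1/2)$. Pair each with a copy whose centre is shifted by a tiny $\delta$ in direction $e_2$, and keep the $t_j$ more than $\delta$ apart, so the only edges are inside pairs. Add about $5^{-d}n$ tiny spheres, also matched in pairs, inside $B(1)\cap\{x_1<1/4\}$, arranged so that your $B_0$ is $B(1)$.

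Inside $B(3)$ the big spheres lie within $O(1/L)$ of the hyperplanes $\{x_1=t_j\}$. So each of them crosses $\partial B_z(r)$ for every $z\in B(1/2)$ and every $r\in[1,2]$, and each has $w(S)=\Theta(1)$. All degrees are $1$, so the target is $\Theta_d(n^{1-1/d})$, and in fact $\emptyset$ is already a balanced separator. Yet your $\fX$ deterministically contains $\Theta(n)$ spheres.

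Your transversality claim fails on exactly this configuration. Two parallel, nearly flat, nested spheres at distance much smaller than $w$ both cut the same small ball through its equator without meeting, and the radial-width normalization does nothing to prevent this. Averaging over $z$ cannot help, because these spheres are concentric about a far-away point, not about $z$. Nesting also breaks your balance step. The spheres of radius in $[1/2,2]$ that you could not control can form a long connected chain of slightly offset, nearly concentric spheres. For fixed $z$ and most $r$, $\partial B_z(r)$ misses this chain entirely; for random $z$ it cuts almost all of it.

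What is missing is a way to separate nested spheres without deleting them. The paper does this in four steps, with $\Sigma$ the target bound.
\begin{enumerate}
\item It deletes the $\Sigma/4$ highest-degree spheres, so the maximum degree $\Delta$ is at most $\Sigma/4$ and at most $c_dn$.
\item It uses a chain lemma: if a point lies inside $2s$ spheres, the neighbourhood of the median sphere in a linear extension of the containment order has at most $\Delta$ elements and leaves components of size at most $n-s$. This is applied to a minimal-radius sphere $S_0$ containing $4c_dn$ centres. If fewer than $2c_dn$ spheres surround $S_0$, it instead deletes $N(S_0)$ and passes to the spheres inside $S_0$. Either way one reduces to the case where no sphere contains $4c_dn$ centres.
\item It takes $S_{\mathrm{rand}}$ of radius uniform in $[R,2R]$ about a minimal ball containing $4c_dn$ centres. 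It deletes only the spheres whose cap radius $\tilde r(S)$ on $S_{\mathrm{rand}}$ is at most $C'R(\deg(S)/\Sigma)^{1/(d-1)}$; each is hit with probability at most $2\tilde r/R$.
\item It handles the remaining crossing spheres greedily. Take the largest remaining $S_i$ meeting $S_{\mathrm{rand}}$, delete only $N(S_i)$, and remove from consideration everything meeting the ball of $S_i$. Those contents become a separate region of fewer than $4c_dn$ spheres by step 2. The balls of the $S_i$ are disjoint, so their large caps pack on $S_{\mathrm{rand}}$, giving $\sum_i\deg(S_i)\le\Sigma/4$.
\end{enumerate}
So degree enters through cap-packing on the cutting sphere together with the nested-chain lemma, not through a ply-type volume inequality of the kind you propose.
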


As with Corollary~\ref{cor:ball-separator-edge}, this result implies Theorem~\ref{thm:sphere-separator-main-intro} by an application of H\"older's inequality. We start with an elementary lemma that is needed for the proof.

\begin{lemma}
\label{lem:nested-sphere-separator}
Let $\fS$ be a collection of spheres in $\R^d$. Suppose that $G(\fS)$ has maximum degree at most $\Delta$ and that there exists a point $p\in\R^d$ contained in the interior of at least $2s$ spheres. Then there exists a set $\fX\subseteq\fS$ with $|\fX|\leq\Delta$ such that every connected component of $G(\fS)\setminus\fX$ has size at most $|\fS|-s$.
\end{lemma}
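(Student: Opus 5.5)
Let $\fS_p\subseteq\fS$ be the spheres whose interior contains $p$, so $|\fS_p|\ge 2s$. Any two of these spheres either intersect or are nested, because both open balls contain $p$. Order $\fS_p$ by radius: $S_1,\dots,S_N$ with $r_1\le\cdots\le r_N$ and $N\geq 2s$. Take the middle sphere $S^*=S_s$ (or $S_{s+1}$, to be chosen for balance) and let $\fX$ be the set of spheres of $\fS$ that intersect $S^*$, together with $S^*$ itself. That gives $|\fX|\le\deg(S^*)+1$, so the count needs care. The simpler fix is to drop $S^*$ from $\fX$, since $S^*$ is then an isolated vertex of $G(\fS)\setminus\fX$ and lies in a component of size $1$. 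So take $\fX=N(S^*)$, which has $|\fX|\le\Delta$.

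In $G(\fS)\setminus\fX$, every remaining sphere other than $S^*$ is disjoint from $S^*$. So it lies either in the open ball bounded by $S^*$ (inside) or in its exterior (outside). Spheres are connected, so no edge joins an inside sphere to an outside sphere, and each component lies entirely inside, entirely outside, or is $\{S^*\}$.

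Now count. For $i<s$, $S_i$ contains $p$ in its interior and has radius at most $r_s$. It therefore either meets $S^*$ or lies inside it, and it cannot contain $S^*$ because it is no larger (equal radii with distinct spheres force intersection or containment both ways, which is fine). So each $S_i$ with $i<s$ is in $\fX$ or inside, and in particular is not outside. For $i>s$, $S_i$ is in $\fX$ or strictly encloses $S^*$, so it is not inside. Thus the outside components avoid $S_1,\dots,S_s$, at least $s$ spheres, and have total size at most $|\fS|-s$. Symmetrically, the inside components avoid $S_{s},\dots,S_N$, at least $N-s+1\ge s+1$ spheres, so they have size at most $|\fS|-s$.

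The main subtlety is the nested-or-intersecting claim. If two spheres are disjoint and both balls contain $p$, the balls are not disjoint, so one ball contains the other. I will also handle equal radii by breaking ties arbitrarily: a smaller-or-equal sphere containing $p$ cannot strictly contain $S^*$. I expect no serious obstacle beyond this bookkeeping, and the choice of $S^*$ as the $s$-th smallest is what makes both sides lose at least $s$.
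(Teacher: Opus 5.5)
Your proof is correct and follows the paper's argument: ordering the spheres around $p$ by radius is a particular linear extension of the containment order the paper uses, and taking $\fX=N(S^*)$ for the $s$-th sphere, then counting which spheres can lie inside or outside $S^*$, is exactly its median-sphere count. You are also slightly more careful than the paper in noting that $S^*$ itself survives as an isolated component of size $1$.
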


\begin{proof}
Let $\fS'\subseteq\fS$ be the set of spheres containing $p$ in their interior. For any two spheres $S,S'\in\fS'$, either they intersect or one contains the other in its interior. Consider the partial order on $\fS'$ defined by containment. Take any linear extension of this poset and let $S_0\in\fS'$ be the median element of this linear extension. This means that there are at least $s-1$ spheres which either intersect $S_0$ or are contained in its interior and there are at least $s-1$ spheres which either intersect $S_0$ or contain it in their interior.

Now let $\fX=N(S_0)$, the neighborhood of $S_0$ in $G(\fS)$, i.e., the set of spheres which intersect $S_0$. We claim that every connected component of $G(\fS)\setminus\fX$ has size at most $|\fS|-s$. To see this, note that no sphere in $\fS\setminus\fX$ intersects $S_0$, so any connected component of $G(\fS)\setminus\fX$ consists of spheres either entirely inside $S_0$ or entirely outside $S_0$. However, we showed that there are at least $s$ spheres that are not of the first type ($S_0$ and the $s-1$ spheres which follow it in the linear extension) and at least $s$ spheres that are not of the second type. Thus each connected component of $G(\fS)\setminus\fX$ has size at most $|\fS|-s$.
\end{proof}

\begin{proof}[Proof of Theorem~\ref{thm:sphere-separator-main}]
Set \[\Sigma=C_d\paren{\sum_{S\in\fS} \deg(S)^{\tfrac1{d-1}}}^{1-\tfrac1d}.\] Choosing $c_d>0$ small in terms of $d$ and $C_d$ large in terms of $c_d$, we will construct a $(1-c_d)$-balanced separator for $G(\fS)$ of size at most $\Sigma$. By Lemma~\ref{lem:balancing-separators}, losing another constant factor depending only on $d$, this suffices to prove the desired result. Assume $\Sigma<n$, since the result is trivial otherwise.

Through the proof, if we have not already found the desired separator, we will define sets $\fX_i \subset \fS$ for $i=1,2,3,4$ with $|\fX_i| \leq \Sigma/4$ for each $i$ so that $\fX_1\cup \fX_2\cup \fX_3\cup \fX_4$ is the desired separator.

\vspace{5pt} \noindent \textbf{Step 1:}
First, define $\fX_1\subset \fS$ to be the $\Sigma/4$ vertices of highest degree in $G(\fS)$. Let $\Delta$ be the maximum degree of the resulting graph $G(\fS)\setminus\fX_1$. We claim that $\Delta\leq\Sigma/4$ and $\Delta\leq c_dn$. If the first does not hold, then $G(\fS)$ would have at least $\Sigma/4$ vertices of degree at least $\Sigma/4$, meaning that
\[\Sigma=C_d\paren{\sum_{S\in\fS} \deg(S)^{1/(d-1)}}^{1-1/d}\geq C_d\paren{(\Sigma/4)(\Sigma/4)^{1/(d-1)}}^{1-1/d}=\frac{C_d}4\Sigma.\]
Clearly this is impossible for $C_d>4$. Similarly, if the second fails then $G(\fX)$ would have at least $\Sigma/4$ vertices of degree at least $c_dn$, meaning that
\[\Sigma\geq C_d\paren{(\Sigma/4)(c_dn)^{1/(d-1)}}^{1-1/d}=\frac{C_dc_d^{1/d}}{4^{1-1/d}}\Sigma^{1-1/d}n^{1/d}>\frac{C_dc_d^{1/d}}{4^{1-1/d}}\Sigma,\]
where the final inequality holds since $\Sigma<n$. Again, this is impossible for $C_d$ sufficiently large in terms of $c_d$.

\vspace{5pt} \noindent \textbf{Step 2:} Write $\fS_1=\fS\setminus \fX_1$. The goal of this step is to pass to a subset $\fS_2\subseteq \fS_1$ such that every sphere $S\in\fS_2$ contains fewer than $4c_dn$ centers of spheres in $\fS_2$. Let $S_0\in\fS_1$ be a sphere of minimal radius that contains $4c_dn$ centers of spheres in $\fS_1$. (If no such sphere exists, taking $\fS_2=\fS_1$ completes this step.) Define $\fS_{\supseteq S_0}\subseteq \fS_1$ to be the set of spheres which contain $S_0$ entirely within their interior. First suppose that $|\fS_{\supseteq S_0}|\geq 2c_dn$. In this case we can apply Lemma~\ref{lem:nested-sphere-separator} to find a separator $\fX_2\subset\fS_1$ such that $|\fX_2|\leq\Delta\leq\Sigma/4$ and each connected component of $G(\fS_1)\setminus\fX_2$ has size at most $(1-c_d)n$. Then $\fX_1\cup\fX_2$ is a $(1-c_d)$-balanced separator for $G(\fS)$ of size at most $\Sigma/2$.

Thus we can assume that $|\fS_{\supseteq S_0}|< 2c_dn$. In this case, define $\fX_2=N(S_0)$, the neighborhood of $S_0$ in $G(\fS)$. Define $\fS_2\subseteq\fS_1$ to be the set of spheres which are entirely contained within $S_0$. Clearly $|\fX_2|\leq \Delta\leq\Sigma/4$.

\begin{claim}
\label{claim:sphere-separator-proof}
If $G(\fS)\setminus(\fX_1\cup \fX_2)$ has a connected component of size more than $(1-c_d)n$, then it is also a connected component of $G(\fS_2)$. 
\end{claim}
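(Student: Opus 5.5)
We prove the claim by locating where the large component $K$ of $G(\fS)\setminus(\fX_1\cup\fX_2)$ can live relative to $S_0$, and then showing that it sits entirely inside $S_0$.

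The starting point is that every sphere intersecting $S_0$ has been removed. Indeed, $\fX_2=N(S_0)$, so every sphere of $\fS$ meeting $S_0$ lies in $\fX_1\cup\fX_2$. The sphere $S_0$ itself is also removed: it lies in $N(S_0)$ under the convention that a sphere meets itself, and otherwise we simply add it. Every remaining sphere $S\in\fS_1\setminus\fX_2$ is therefore disjoint from $S_0$, so it lies in exactly one of three classes:
\begin{enumerate}[(i)]
\item $S$ is entirely inside $S_0$, that is, $S\in\fS_2$;
\item $S$ contains $S_0$ in its interior, that is, $S\in\fS_{\supseteq S_0}$;
\item $S_0$ and $S$ lie in each other's exteriors.
\end{enumerate}
Two spheres from class (i) and class (iii) cannot intersect. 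Likewise, a sphere of class (i) cannot meet a sphere of class (ii), unless that class (ii) sphere passes through $S_0$, which is excluded. So every connected component of the remaining graph lies entirely inside $S_0$ or entirely in classes (ii) and (iii). Moreover, a component of the first kind is exactly a connected component of $G(\fS_2)$. The reason is that $\fS_2\subseteq\fS_1\setminus\fX_2$: a sphere strictly inside $S_0$ does not meet $S_0$, so it lies in neither $\fX_1$ nor $\fX_2$. Its neighbours in the remaining graph are then exactly its neighbours within $\fS_2$.

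It remains to rule out a component of size more than $(1-c_d)n$ that lies outside $S_0$. Such a component avoids all spheres of $\fS_2$. So it suffices to show that $\fS_2$ is large, specifically that $|\fS_2|\ge c_dn$. This is the main step, and we will use the defining property of $S_0$: it contains $4c_dn$ centers of spheres in $\fS_1$.

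We count the spheres of $\fS_1$ whose centers lie inside $S_0$ (at least $4c_dn$ of them) and subtract those that are not in $\fS_2$. Such a sphere either intersects $S_0$, so it lies in $N(S_0)$, of which there are at most $\Delta\le c_dn$; or it contains $S_0$ in its interior, so it lies in $\fS_{\supseteq S_0}$, of which there are fewer than $2c_dn$ by the case assumption. A sphere whose center is inside $S_0$ and which is disjoint from $S_0$ must be strictly inside or strictly contain $S_0$, so these are the only possibilities. Hence
\[
|\fS_2|\ge 4c_dn-c_dn-2c_dn=c_dn.
\]
The bound $\Delta\le c_dn$ comes from Step 1.

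Therefore any component outside $S_0$ misses at least $c_dn$ spheres of $\fS_2$, so it has size at most $n-c_dn=(1-c_d)n$. A component of size more than $(1-c_d)n$ must therefore lie inside $S_0$, and so it is a connected component of $G(\fS_2)$. The only delicate point is this counting step, which requires the case split on $|\fS_{\supseteq S_0}|$ made just before the claim.
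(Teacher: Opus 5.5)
Your proof is correct and follows the paper's argument: every component of $G(\fS)\setminus(\fX_1\cup\fX_2)$ lies either strictly inside $S_0$, in which case it is a component of $G(\fS_2)$, or outside it, and the count $|\fS_2|\ge 4c_dn-2c_dn-\Delta\ge c_dn$ bounds the outside components by $(1-c_d)n$. Your three-way classification of the remaining spheres and your check that $\fS_2$ is disjoint from $\fX_1\cup\fX_2$ only spell out what the paper states in one line.
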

\begin{proof}
Every connected component of $G(\fS)\setminus (\fX_1\cup \fX_2)$ is either made up of spheres entirely inside $S_0$ or entirely outside $S_0$. The former are also connected components of $G(\fS_2)$ while the latter are not. There are at least $4c_dn$ centers of spheres of $\fS_1$ inside of $S_0$. Since at most $2c_dn$ of these correspond to spheres surrounding $S_0$ and at most $\Delta$ correspond to spheres which intersect $S_0$, we see that there are at least $2c_dn-\Delta\geq c_dn$ spheres inside $S_0$. Therefore any connected component outside of $S_0$ has size at most $(1-c_d)n$.
\end{proof}

We now construct a separator for $G(\fS_2)$. We know that every sphere in $\fS_2$ contains fewer than $4c_dn$ centers of spheres in $\fS_2$. We can also assume that $|\fS_2|\geq(1-c_d)n$, otherwise $\fX_1\cup\fX_2$ is already the desired $(1-c_d)$-balanced separator.

\vspace{5pt} \noindent \textbf{Step 3:} Let $B_{\mathrm{in}}$ be a ball of minimal radius which contains at least $4c_dn$ centers of spheres in $\fS_2$. Since we showed that no sphere in $\fS_2$ contains $4c_dn$ centers, we see that no sphere in $\fS_2$ contains $B_{\mathrm{in}}$ entirely in its interior. This means that each of the $4c_dn$ spheres whose centers lie in $B_{\mathrm{in}}$ also intersect $B_{\mathrm{in}}$.

Write $R$ for the radius of $B_{\mathrm{in}}$ and define $B_{\mathrm{out}}$ to be the ball with the same center and radius $2R$. By the minimality of $B_{\mathrm{in}}$, we see that $B_{\mathrm{out}}$ contains at most $4^d\ceil{4c_dn}$ centers. In particular, choosing $c_d$ sufficiently small, there are at least $c_dn$ sphere centers outside of $B_{\mathrm{out}}$. Let $S_{\mathrm{rand}}$ be a sphere with the same center as $B_{\mathrm{in}}$ and $B_{\mathrm{out}}$ whose radius is chosen uniformly at random in $[R,2R]$.

For a sphere $S\in\fS$, write $r(S)$ for the radius of $S$. Write $\tilde r(S)$ for the radius of the spherical cap formed by intersecting the ball bounded by $S$ with $S_{\mathrm{rand}}$. (The radius of a spherical cap is the minimal $\tilde r(S)$ so that the cap is contained in a ball of radius $\tilde r(S)$.)

Choose some $C_d'$ sufficiently large in terms of $d$ and let $\fX_3$ consist of the spheres $S\in \fS_2$ which intersect $S_{\mathrm{rand}}$ and which satisfy $\tilde r(S)\leq C_d' R(\deg(S)/\Sigma)^{1/(d-1)}$.

\begin{claim}
\label{claim:expected-x3-size}
$\E[|\fX_3|]\leq\Sigma/4$.
\end{claim}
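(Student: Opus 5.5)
The plan is to bound $\Pr[S\in\fX_3]$ separately for each sphere $S\in\fS_2$, in the same spirit as the proof of Theorem~\ref{thm:ball-separator-lp}, and then sum. Write $t_S=C_d'R(\deg(S)/\Sigma)^{1/(d-1)}$. Since the radius $\rho$ of $S_{\mathrm{rand}}$ is uniform on $[R,2R]$, it suffices to prove the following geometric fact: for any sphere $S$ and any $t\ge 0$, the set of $\rho\in[R,2R]$ for which $S$ meets the sphere of radius $\rho$ and the resulting cap has radius at most $t$ has measure $O(t)$, with an absolute constant. Given this fact, $\Pr[S\in\fX_3]\le O(t_S)/R$. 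Summing, and using $\sum_S\deg(S)^{1/(d-1)}=(\Sigma/C_d)^{d/(d-1)}$, gives
\[\E[|\fX_3|]\le O\Big(C_d'\,\Sigma^{-\tfrac1{d-1}}\sum_{S\in\fS_2}\deg(S)^{\tfrac1{d-1}}\Big)\le O\big(C_d'\,C_d^{-\tfrac d{d-1}}\big)\,\Sigma\le \Sigma/4,\]
once $C_d$ is chosen large in terms of $C_d'$. This is consistent with the order of constants in the proof, since $C_d$ is allowed to be large in terms of everything else. Spheres of degree $0$ have $t_S=0$, so they contribute probability zero.

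To prove the geometric fact, translate so that the center of $B_{\mathrm{in}}$ is the origin. Let $S$ have center $c$ and radius $r$. If $t\ge R$ the bound is trivial, since the whole interval $[R,2R]$ has length $R\le t$. So assume $t<R$.

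For a given $\rho$, the ball bounded by $S$ meets the sphere of radius $\rho$ in a spherical cap around the direction of $c$, with angular radius $\theta$ satisfying
\[\cos\theta=\frac{\rho^2+|c|^2-r^2}{2\rho|c|}.\]
If $\theta\ge\pi/2$, the cap has radius $\rho\ge R>t$, so that case cannot occur. Otherwise the cap radius is $\rho\sin\theta$. The condition $\rho\sin\theta\le t$ forces $\theta=O(t/R)$, and hence $1-\cos\theta=O(t^2/R^2)$. Rewriting, this says
\[0\le r^2-(\rho-|c|)^2=2\rho|c|(1-\cos\theta)\le a,\qquad a:=O\big(|c|t^2/R\big).\]
So $u=|\rho-|c||$ must lie in $[\sqrt{\max(r^2-a,0)},\,r]$. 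This interval has length at most $\min(\sqrt a,\,a/r)$, and the two signs of $\rho-|c|$ cost only a factor of $2$.

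Because $S$ meets some sphere of radius at most $2R$ about the origin, we have $|c|\le r+2R$. We then split into two cases.
\begin{itemize}
\item If $r\ge R$, then $a/r=O\big((r+R)t^2/(rR)\big)=O(t^2/R)=O(t)$.
\item If $r<R$, then $|c|\le 3R$, so $\sqrt a=O(t)$.
\end{itemize}
In both cases the measure is $O(t)$. (The degenerate case $c=0$ gives either an empty set or the single value $\rho=r$, which has measure zero.)

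The main obstacle is exactly this geometric lemma, and in particular the regime of huge spheres whose cap is nearly tangent. There, small cap radius must come from $\rho$ lying in a thin shell near $|c|-r$ or $|c|+r$. The estimate $a/r$ handles this regime, and it is the step that replaces the simple ``interval length $2r(B')$'' argument used for balls. The rest of the proof is the bookkeeping of constants shown above.
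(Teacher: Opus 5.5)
Your proof is correct and has the same outer structure as the paper's: bound $\Pr[S\in\fX_3]$ by $O(t_S/R)$ for each $S\in\fS_2$, then sum using the definition of $\Sigma$ and take $C_d$ large in terms of $C_d'$. The difference is in how the per-sphere geometric bound is proved.

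The paper uses a picture. It asserts that the cap radius is at most $\tilde r$ exactly when $S_{\mathrm{rand}}$ crosses the line through the two centers within distance $r-\sqrt{r^2-\tilde r^2}$ of an endpoint of the diameter of $S$. This treats $S_{\mathrm{rand}}$ as a hyperplane and computes a flat chord, giving the constant $2$. The case $\tilde r>r(S)$ is handled by the trivial bound $2r(S)/R$.

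You instead compute the angular radius exactly from $\cos\theta=(\rho^2+|c|^2-r^2)/(2\rho|c|)$. You turn the condition into $r^2-(\rho-|c|)^2\le 2|c|t^2/R$ and split on $r\ge R$ versus $r<R$. This costs a larger absolute constant (about $12$), which is absorbed into $C_d$.

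Your route is longer but more robust, because the paper's flat-chord identification is not literally exact. Take a large sphere ($r\gg R$) whose point nearest the center of $B_{\mathrm{in}}$ is at distance in $[R,2R]$. At $\rho=|c|-r+\epsilon$, the cap on $S_{\mathrm{rand}}$ has radius about $\sqrt{2\epsilon r\rho/(r+\rho)}\approx\sqrt{2\epsilon\rho}$. This is much smaller than the flat chord $\approx\sqrt{2\epsilon r}$. So the true probability can exceed the paper's intermediate bound $2(r-\sqrt{r^2-\tilde r^2})/R\approx \tilde r^2/(rR)$ by a factor of order $r/R$.

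The probability is still $O(\tilde r/R)$, and your $a/r$ estimate in the case $r\ge R$ proves exactly that. So your argument gives a rigorous version of the step that the paper's figure only approximates.
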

\begin{proof}
We need to upper bound the probability that a sphere $S$ intersects $S_{\mathrm{rand}}$ in a spherical cap of radius at most $\tilde r$. If $\tilde r\leq r(S)$, the probability is upper bounded by 
\[\frac{2\paren{r(S)-\sqrt{r(S)^2-\tilde r^2}}}R\leq \frac{2\tilde r}{R}.\]
To see this, note that this event is the same as the event that $S_{\mathrm{rand}}$ intersects the line connecting the center of $B_{\mathrm{in}}$ and the center of $S$ in one of the bold segments in Figure~\ref{fig:expected-x3-size}.

Now if $\tilde r>r(S)$, the probability we need to upper bound is just the probability that $S_{\mathrm{rand}}$ and $S$ intersect. This quantity is at most $2r(S)/R<2\tilde r/R$. 

Since $\fX_3$ is defined with the cutoff $\tilde r = C_d' R(\deg(S)/\Sigma)^{1/(d-1)}$, for each $S\in\fS_2$ we have the bound
\[\Pr\left[S\text{ intersects }S_{\mathrm{rand}}\text{ and }\tilde r(S)\leq C_d' R(\deg(S)/\Sigma)^{1/(d-1)}\right]\leq 2C_d'(\deg(S)/\Sigma)^{1/(d-1)}.\]
This implies that
\begin{align*}
\E[|\fX_3|]
&\leq \sum_{S\in\fS_2} 2C_d'(\deg(S)/\Sigma)^{1/(d-1)}
=\frac{2C_d'}{\Sigma^{1/(d-1)}C_d^{d/(d-1)}}\Sigma^{d/(d-1)}
\leq \Sigma/4,
\end{align*}
where the equality is by the definition of $\Sigma$ and the last inequality holds for $C_d$ chosen appropriately large in terms of $C_d'$.
\end{proof}

Fix some choice of $S_{\mathrm{rand}}$ for which $|\fX_3|\leq\Sigma/4$. 

\begin{figure}
\centering
\begin{tikzpicture}[scale=4]
\clip (1.4-1,-0.5) rectangle + (2,1);

\draw(0,0) circle (1);
\draw(0,0) circle (2);

\draw(-2,0) -- (5,0);

\draw(1.4,0) circle (0.25);
\draw(1.4,0)--(1.4-0.6*0.25,0.8*0.25)--(1.4-0.6*0.25,0);
\draw(1.4,0)--(1.4+0.6*0.25,-0.8*0.25)--(1.4+0.6*0.25,0);
\draw[line width=2pt] (1.4-0.25,0)--(1.4-0.6*0.25,0);
\draw[line width=2pt] (1.4+0.25,0)--(1.4+0.6*0.25,0);

\node[fill=black, circle, inner sep=1pt] at (1.4,0) {};
\node[] at (1.4,-0.25) [below] {$S$};
\node[] at (1.4-0.6*0.25-0.03,0.8*0.25*0.5) [] {$\tilde r$};
\node[] at (1.4-0.6*0.25*0.5+0.09,0.8*0.25*0.5+0.03) [] {$r(S)$};

\end{tikzpicture}
\caption{\label{fig:expected-x3-size}
Bounding $\Pr[S\in\fX_3]$.}
\end{figure}

\vspace{5pt} \noindent \textbf{Step 4:} Define $\fS_3=\fS_2\setminus \fX_3$. We perform the following iterative process. Start with $\fX_4=\emptyset$ and $\fS_4=\fS_3$. We will define spheres $S_1,S_2,\ldots,S_\ell$. Suppose we have defined $S_1,\ldots, S_{i-1}$. Define $S_i$ to be a sphere in $\fS_4$ which intersects $S_{\mathrm{rand}}$ and has maximal radius among such spheres. Place the neighborhood $N(S_i)$ in $\fX_4$ and remove all spheres which intersect the ball bounded by $S_i$ from $\fS_4$ (in particular we remove $S_i$ and $N(S_i)$ from $\fS_4$). Repeat until no more spheres in $\fS_4$ intersect $S_{\mathrm{rand}}$.

We claim that the balls bounded by $S_1,S_2,\ldots,S_\ell$ are disjoint. For $i<j$ we know that $r(S_i)\geq r(S_j)$ and the ball bounded by $S_i$ is disjoint from the sphere $S_j$. The latter implies that $S_i,S_j$ do not intersect and that $S_i$ does not contain $S_j$; the former implies that $S_j$ does not contain $S_i$. 

\begin{claim}
\label{claim:x4-size}
$|\fX_4|\leq\Sigma/4$.
\end{claim}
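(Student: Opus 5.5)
We need $|\fX_4|=\sum_i|N(S_i)|\le\sum_i\deg(S_i)\le\Sigma/4$, using three facts: the balls bounded by $S_1,\dots,S_\ell$ are disjoint; each $S_i$ meets $S_{\mathrm{rand}}$; and each $S_i\notin\fX_3$. The plan is a volume-packing argument on the caps that the balls cut from $S_{\mathrm{rand}}$, followed by H\"older's inequality exactly as in the proof of Theorem~\ref{thm:ball-separator-lp}.

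\textbf{Step 1 (lower bound on each cap).} Each $S_i$ lies in $\fS_3=\fS_2\setminus\fX_3$ and meets $S_{\mathrm{rand}}$, so by the definition of $\fX_3$,
\[\tilde r(S_i)>C_d'R(\deg(S_i)/\Sigma)^{1/(d-1)},\qquad\text{equivalently}\qquad \deg(S_i)<\Sigma\bigl(\tilde r(S_i)/(C_d'R)\bigr)^{d-1}.\]

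\textbf{Step 2 (packing the caps).} The caps cut by the balls bounded by $S_i$ on $S_{\mathrm{rand}}$ are pairwise disjoint, since the balls are disjoint. The radius of $S_{\mathrm{rand}}$ is at most $2R$, so its total $(d-1)$-dimensional area is $O_d(R^{d-1})$. I need the estimate that a cap of radius $\tilde r$ on a sphere of radius $\rho\in[R,2R]$ has area at least $c'_d\min(\tilde r,\rho)^{d-1}$, and that $\tilde r\le\rho$ always holds, or at least $\tilde r\le 2R$. Summing then gives
\[\sum_i\tilde r(S_i)^{d-1}\le K_dR^{d-1}.\]

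\textbf{Step 3 (conclusion).} Combining Steps 1 and 2,
\[\sum_i\deg(S_i)<\Sigma(C_d')^{-(d-1)}R^{-(d-1)}\sum_i\tilde r(S_i)^{d-1}\le\Sigma K_d(C_d')^{-(d-1)}\le\Sigma/4\]
once $C_d'$ is chosen large in terms of $d$. This is allowed, because in Step 3 of the main proof $C_d'$ was only required to be sufficiently large in terms of $d$, with $C_d$ chosen afterwards. Note that Step 1 makes H\"older's inequality unnecessary here: each degree is bounded directly by a cap area.

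\textbf{Main obstacle.} The delicate point is the cap area estimate in Step 2. One subtlety is which radius notion is used: the paper takes $\tilde r$ to be the radius of the smallest ball containing the cap, not the geodesic radius. For caps up to a hemisphere, the area is comparable to $\tilde r^{d-1}$ with constants depending only on $d$. For caps larger than a hemisphere, $\tilde r$ equals the radius of $S_{\mathrm{rand}}$, which is at most $2R$, and the area is at least half the sphere, which is $\Omega_d(R^{d-1})$. So the bound $\operatorname{area}\ge c'_d\tilde r^{d-1}$ holds in every case. I would verify this with an explicit formula for cap area in terms of the chord half-width.
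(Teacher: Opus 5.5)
Your proposal is correct and is essentially the paper's proof. You use $S_i\notin\fX_3$ to get $\deg(S_i)<\Sigma\bigl(\tilde r(S_i)/(C_d'R)\bigr)^{d-1}$, then pack the pairwise disjoint caps cut from $S_{\mathrm{rand}}$ (each of area at least a constant times $\tilde r(S_i)^{d-1}$; the paper states this as $\tau_{d-1}\tilde r^{d-1}$ and your case analysis also yields it) against the total area at most $\sigma_d(2R)^{d-1}$, and finally choose $C_d'$ large in terms of $d$. As you note yourself, H\"older's inequality plays no role here, so you can drop the mention of it in your overview.
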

\begin{proof}
By the definition of $\fX_3$, every sphere $S\in\fS_3$ which intersects $S_{\mathrm{rand}}$ also satisfies $\tilde r(S)>C_d'R(\deg(S)/\Sigma)^{1/(d-1)}$. Thus
\[|\fX_4|\leq\sum_{i=1}^\ell\deg(S_i)<\sum_{i=1}^\ell\Sigma\paren{\frac{\tilde r(S_i)}{C_d' R}}^{d-1}.\]
Now since the balls bounded by $S_1,\ldots,S_\ell$ are disjoint, we have
\[\sum_{i=1}^\ell \tau_{d-1}\tilde r(S_i)^{d-1}\leq \sigma_d (2R)^{d-1}\]
where $\tau_k,\sigma_k$ denote the $k$-volume of the unit ball in $\R^k$ and the $(k-1)$-volume of the unit sphere in $\R^k$, respectively. (In the above inequality, we use the fact that the $(d-1)$-volume of a spherical cap of radius $\tilde r$ on any sphere in $\R^d$ is at least the $(d-1)$-volume of a ball of radius $\tilde r$ in $\R^{d-1}$.) Thus we conclude that
\[|\fX_4|\leq \Sigma \frac{2^{d-1}\sigma_d}{C_d'^{d-1}\tau_{d-1}}\leq\Sigma/4,\]
where the last inequality holds for $C_d'$ chosen sufficiently large in terms of $d$.
\end{proof}

\begin{claim}
$\fX=\fX_1\cup \fX_2\cup \fX_3\cup \fX_4$ is a $(1-c_d)$-balanced separator for $G(\fS)$. 
\end{claim}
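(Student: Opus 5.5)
The plan is to combine Claim~\ref{claim:sphere-separator-proof} with a case analysis of where the surviving spheres of $\fS_2$ sit relative to $S_{\mathrm{rand}}$ and the chosen spheres $S_1,\dots,S_\ell$. The size bound $|\fX|\le\Sigma$ is already established by Step~1, the bound $|\fX_2|\le\Delta$, the choice of $S_{\mathrm{rand}}$, and Claim~\ref{claim:x4-size}, so only the balance condition needs proof.

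Suppose some component $K$ of $G(\fS)\setminus\fX$ has more than $(1-c_d)n$ vertices. Since $\fX\supseteq\fX_1\cup\fX_2$, $K$ lies in a component of $G(\fS)\setminus(\fX_1\cup\fX_2)$ of size more than $(1-c_d)n$. By Claim~\ref{claim:sphere-separator-proof}, that component is a component of $G(\fS_2)$. Hence $K$ is a component of $G(\fS_2)\setminus(\fX_3\cup\fX_4)$, and it suffices to show that every such component has size at most $(1-c_d)n$.

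I would sort each sphere $S\in\fS_2\setminus(\fX_3\cup\fX_4)$ into one of four types.
\begin{enumerate}[(a)]
\item $S$ lies in the closed ball bounded by some $S_i$ (this includes $S_i$ itself).
\item $S$ is disjoint from $S_{\mathrm{rand}}$ and lies inside it.
\item $S$ is disjoint from $S_{\mathrm{rand}}$ and lies outside it.
\item $S$ is disjoint from $S_{\mathrm{rand}}$ and contains it.
\end{enumerate}
These types are exhaustive. A sphere of $\fS_3$ meeting $S_{\mathrm{rand}}$ is eventually removed from $\fS_4$, because the process runs until no such sphere remains. It is removed only because it meets the ball of some $S_i$. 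Then either it meets $S_i$, so it lies in $N(S_i)\subseteq\fX_4$, or it is $S_i$ or lies strictly inside $S_i$, which is type (a). Type (d) cannot occur: such a sphere would contain $B_{\mathrm{in}}$ in its interior, and Step~3 showed no sphere of $\fS_2$ does.

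Next I would check that each component is confined to one type. A type-(a) sphere strictly inside $S_i$ can only meet spheres that meet the ball of $S_i$. Such a sphere either meets $S_i$, and so is in $\fX_4$ (or was removed earlier into $\fX_3\cup\fX_4$), or lies inside $S_i$. The same holds for $S_i$, since all of its neighbours are in $\fX_4$. So type-(a) components stay inside a single ball. Types (b) and (c) cannot be adjacent to each other without crossing $S_{\mathrm{rand}}$. They also cannot be adjacent to type (a), since any sphere meeting $S_i$ or its interior has been removed.

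Finally, I would bound each kind of component, using that $c_d$ is small.
\begin{itemize}
\item Inside the ball of $S_i$: $S_i\in\fS_2$ contains fewer than $4c_dn$ centers of spheres of $\fS_2$, so the component has fewer than $4c_dn$ spheres.
\item Type (b): all centers lie in $B_{\mathrm{out}}$, so there are at most $4^d\ceil{4c_dn}\le(1-c_d)n$ of them.
\item Type (c): the at least $4c_dn$ spheres with centers in $B_{\mathrm{in}}$ each meet $B_{\mathrm{in}}\subseteq$ the ball of $S_{\mathrm{rand}}$, so none of them is of type (c). Hence type (c) has at most $n-4c_dn$ spheres.
\end{itemize}

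The main obstacle is the bookkeeping in type (a). One must make sure the spheres removed from $\fS_4$ without entering $\fX_4$ are exactly those strictly inside some $S_i$, and that their components cannot leak out. This rests on the disjointness of the balls bounded by $S_1,\dots,S_\ell$ proved in Step~4, and on the fact that every neighbour of $S_i$ lies in $\fX_4$.
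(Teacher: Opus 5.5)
Your proposal is correct and follows the paper's proof. Like the paper, you reduce via Claim~\ref{claim:sphere-separator-proof} to components of $G(\fS_2)\setminus(\fX_3\cup\fX_4)$, show each lies inside some $S_i$, inside $S_{\mathrm{rand}}$, or outside $S_{\mathrm{rand}}$, and bound these by $4c_dn$, $4^d\lceil 4c_dn\rceil$, and $n-4c_dn$ respectively. Your exclusion of spheres enclosing $S_{\mathrm{rand}}$ is valid but not needed, since such spheres lie outside $S_{\mathrm{rand}}$ as point sets and the last count already covers them.
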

\begin{proof}
By Claim~\ref{claim:sphere-separator-proof}, it suffices to show that each connected component of $G(\fS_2)\setminus(\fX_3\cup \fX_4)$ has size at most $(1-c_d)n$. Note that any sphere in $\fS_2\setminus(\fX_3\cup \fX_4)$ which crosses $S_{\mathrm{rand}}$ must lie in the ball bounded by one of the $S_i$. Furthermore, no sphere in $\fS_2\setminus(\fX_3\cup \fX_4)$ crosses any of the $S_i$. Therefore any connected component of $G(\fS_2)\setminus(\fX_3\cup \fX_4)$ must either lie entirely within some sphere $S_i$, entirely within $S_{\mathrm{rand}}$, or entirely outside of $S_{\mathrm{rand}}$. We chose $\fS_2$ so that any sphere in $\fS_2$ contains at most $4c_dn$ centers of spheres in $\fS_2$. Thus the first type of connected component has size at most $4c_dn$. For the second type, note that $S_{\mathrm{rand}}$ is contained within $B_{\mathrm{out}}$ which we argued earlier contains at most $4^{d}\ceil{4c_dn}<(1-c_d)n$ centers of spheres in $\fS_2$. For the third type of connected component, we know that there are at least $4c_dn$ spheres which intersect $B_{\mathrm{in}}$. Since $B_{\mathrm{in}}$ is contained in the interior of $S_{\mathrm{rand}}$, we see that there are at most $(1-4c_d)n$ spheres outside of $S_{\mathrm{rand}}$, giving the desired bound on the third type of connected component.
\end{proof}
Since $|\fX|\leq\Sigma$, this completes the proof.
\end{proof}

Intersection graphs of spheres are an example of semialgebraic graphs. Using known bounds on the number of edges in $K_{t,t}$-free semialgebraic graphs \cite{TY24}, we deduce the following weak bound for separators in $K_{t,t}$-free sphere intersection graphs. 

\begin{corollary}
\label{cor:ktt-free-separator-weak}
Let $\fS$ be a collection of $n$ spheres in $\R^d$. If $G(\fS)$ is $K_{t,t}$-free, then it has a balanced separator of size $O_d(t^{2/d(d+2)}n^{1-2/d(d+2)})$.
\end{corollary}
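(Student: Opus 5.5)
We will deduce Corollary~\ref{cor:ktt-free-separator-weak} from Theorem~\ref{thm:sphere-separator-main-intro}, which gives a separator of size $O_d(m^{1/d}n^{1-2/d})$. The only missing ingredient is an upper bound on the number of edges $m$ of a $K_{t,t}$-free intersection graph of $n$ spheres. For this we use the Zarankiewicz-type bound for semialgebraic graphs from \cite{TY24}.

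First we record the semialgebraic structure. A sphere in $\R^d$ is parametrized by a point of $\R^{d+1}$: its center $c\in\R^d$ and its radius $r$. Two spheres $(c,r)$ and $(c',r')$ intersect if and only if
\[|r-r'|\le |c-c'|\le r+r',\]
that is, $(r-r')^2\le |c-c'|^2\le (r+r')^2$. This relation is defined by a bounded number of polynomial inequalities of bounded degree in dimension $D=d+1$. Hence $G(\fS)$ is a semialgebraic graph of bounded description complexity with vertices in $\R^{d+1}$.

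The bound from \cite{TY24} states that a $K_{t,t}$-free semialgebraic graph on $n$ vertices in $\R^D$ with bounded complexity has
\[m=O_{D}\bigl(t\, n^{2-2/(D+1)}\bigr)\]
edges. With $D=d+1$ this gives $m=O_d(t\,n^{2-2/(d+2)})$. I expect the main point needing care is recalling the exact form of this bound, in particular that the dependence on $t$ is linear, or at least polynomial with the exponent that produces the stated corollary.

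Substituting into Theorem~\ref{thm:sphere-separator-main-intro} gives a separator of size
\[O_d\bigl(m^{1/d}n^{1-2/d}\bigr)=O_d\bigl(t^{1/d}n^{(2-2/(d+2))/d+1-2/d}\bigr)=O_d\bigl(t^{1/d}n^{1-2/(d(d+2))}\bigr).\]
The exponent of $n$ matches the statement. For the exponent of $t$, note that if $t\ge n$ the bound exceeds $n$ and is trivial, so we may assume $t<n$.

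The stated exponent $2/(d(d+2))$ on $t$ is smaller than $1/d$. To obtain it, I would look for a form of the \cite{TY24} bound such as $m=O(t^{2/(d+2)}n^{2-2/(d+2)})$, or a version with $t$-dependence $t^{1-1/D}$-type; the statement suggests the $t$-exponent equals the $n$-savings exponent. Alternatively, the target may just be a cosmetic simplification. Since $t\le n$, we have
\[t^{1/d}n^{1-2/(d(d+2))}\ge t^{2/(d(d+2))}n^{1-2/(d(d+2))},\]
so the linear-in-$t$ bound alone does not yield the corollary. We therefore need the sharper $t$-dependence from \cite{TY24}. With $m=O_d(t^{2/(d+2)}n^{2-2/(d+2)})$, the substitution gives exactly
\[t^{2/(d(d+2))}n^{1-2/(d(d+2))},\]
and that is the form I would invoke.
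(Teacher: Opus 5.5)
Your final argument is correct and is the same as the paper's: encode each sphere as $(x,r)\in\R^{d+1}$, invoke \cite[Theorem 1.6]{TY24} to get $m=O_d(t^{2/(d+2)}n^{2(d+1)/(d+2)})$ edges, and substitute into Theorem~\ref{thm:sphere-separator-main-intro}; this edge bound is the same as the $t^{2/(d+2)}n^{2-2/(d+2)}$ you settle on. The linear-in-$t$ form you first wrote down is not the one the paper uses, and as you observed it would only give $t^{1/d}$ instead of $t^{2/d(d+2)}$, so drop that detour and cite the sharper form directly.
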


\begin{proof}
Represent the sphere centered at $x\in\R^d$ with radius $r$ by the point $(x,r)\in\R^{d+1}$. Now two spheres $(x,r)$ and $(x',r')$ intersect if and only if $(r-r')^2\leq\|x-x'\|^2\leq(r+r')^2$. These are polynomial inequalities of degree $O(1)$, meaning that $G(\fS)$ can be represented by a semialgebraic graph in $\R^{d+1}$ of total degree $O(1)$. (See \cite[Definition 1.1]{TY24} for the full definition.)

By \cite[Theorem 1.6]{TY24} (a bound on the Zarankiewicz problem for semialgebraic graphs), if $G(\fS)$ is $K_{t,t}$-free, then it has $O_d(t^{2/(d+2)}n^{2(d+1)/(d+2)})$ edges. Applying Theorem~\ref{thm:sphere-separator-main-intro} with this bound shows that $G(\fS)$ has a balanced separator of size
\[O_d\paren{\paren{t^{2/(d+2)}n^{2(d+1)/(d+2)}}^{1/d}n^{1-2/d}}=O_d(t^{2/d(d+2)}n^{1-2/d(d+2)}).\qedhere\]
\end{proof}

We can bootstrap this weak bound to get a better bound on the number of edges in a $K_{t,t}$-free sphere intersection graph. This follows from \cite[Lemma 8]{FP08} which turns a bound on the size of separators into a bound on the number of edges for any hereditary family of graphs. (See \cite[Section 3]{FP14} for another application of this method.)

\begin{corollary}
\label{cor:edge-bound-ktt-free}
Let $\fS$ be a collection of $n$ spheres in $\R^d$. If $G(\fS)$ is $K_{t,t}$-free, then it has $O_d(tn)$ edges.
\end{corollary}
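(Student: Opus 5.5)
We derive the edge bound from the weak separator bound of Corollary~\ref{cor:ktt-free-separator-weak}, using the strategy of \cite[Lemma 8]{FP08}. The key point is that $K_{t,t}$-free sphere intersection graphs form a hereditary family: every induced subgraph is again a $K_{t,t}$-free sphere intersection graph. Consequently, every induced subgraph on $n'$ vertices has a balanced separator of size at most $C t^{\epsilon} n'^{1-\epsilon}$, where $\epsilon = 2/d(d+2)$ and $C=C_d$.

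We prove by induction on $n$ that $e(G)\le K t n$ for a suitable constant $K=K_d$. The naive recursion (split along the separator, recurse on the two sides, and charge the separator vertices their full degree) fails, because separator vertices can have degree about $n$. We therefore handle high-degree vertices first, in the style of Fox--Pach.

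\emph{Reduction to bounded degree.} Let $m=e(G)$ and suppose $m > K t n$. By Kővári--Sós--Turán-type reasoning the graph is not too dense, but we need something sharper. We pass to a subgraph of large minimum degree. Repeatedly deleting vertices of degree below $m/n$ yields a nonempty induced subgraph $G'$ with minimum degree at least $m/n$ and average degree at least $m/n$.

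\emph{Exploiting the separator.} Let $G'$ have $n'$ vertices and $m'\ge (m/n) n'/2$ edges. We apply the separator repeatedly, in the manner of Lipton--Tarjan, to break $G'$ into pieces of size at most $\delta n'$. The total separator size is $O(C t^\epsilon n'^{1-\epsilon}\delta^{-\epsilon})$, since the separator sizes across recursion levels sum geometrically with ratio governed by $\epsilon$. Edges inside pieces number at most $\sum_i |P_i|^2 \le \delta n'^2$.

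Edges incident to the removed set $X$ need more care. The bipartite graph between $X$ and the rest is $K_{t,t}$-free. By Kővári--Sós--Turán it has at most $O(|X| n'^{1-1/t} + t n')$ edges, which is too weak for large $t$.

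Instead we use that every vertex has degree at least $m/n$. Choose $\delta$ so that $\delta n' < m/(2n)$. Then each vertex has at least half its neighbours in $X$. So at least $m'/2$ edges go into $X$, and these edges form a $K_{t,t}$-free bipartite graph with one side of size $|X| = O(t^\epsilon n'^{1-\epsilon}\delta^{-\epsilon})$ which is small.

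In a $K_{t,t}$-free bipartite graph with parts $A$, $X$ where every vertex of $A$ has degree at least $D=m/(2n)$, counting $t$-subsets of $X$ gives
\[
|A|\binom{D}{t}\le t\binom{|X|}{t}.
\]
This forces $|X|\gtrsim D\,(|A|/t)^{1/t}$, and in particular $|X|\ge D$ roughly. So $|X|\gtrsim m/n$.

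Plugging in $\delta \approx m/(n n')$ gives
\[
\frac{m}{n} \lesssim C t^\epsilon n'^{1-\epsilon}\Big(\frac{n n'}{m}\Big)^{\epsilon},
\qquad\text{that is,}\qquad
\Big(\frac{m}{n}\Big)^{1+\epsilon}\lesssim C t^\epsilon n'.
\]

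\emph{Expected obstacle.} This inequality alone only bounds $m/n$ by $n'^{1/(1+\epsilon)}$, so the main obstacle is getting linear dependence on $t$. I would try to strengthen the counting step by applying the $t$-subset inequality with its full power. Since $|A|\approx n'$, the inequality $n' D^t/t! \lesssim t |X|^t/t!$ gives $|X|\gtrsim D (n'/t)^{1/t}$. Then, choosing $\delta$ more cleverly, compare $|X|\approx t^\epsilon (n'/\delta n')^{\epsilon}\cdot \delta n'$-type terms with $D\approx m/n$. This should yield $D=O(t)$, because the separator exponent beats the trivial one, as in \cite[Lemma 8]{FP08}.

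If that fails, I would fall back on citing \cite[Lemma 8]{FP08} directly. It states that a hereditary family whose $n$-vertex members have separators of size $O(n^{1-\epsilon})$ and which are $K_{t,t}$-free has $O_{\epsilon}(tn)$ edges. Its hypotheses are verified by Corollary~\ref{cor:ktt-free-separator-weak}, and $\epsilon$ depends only on $d$.
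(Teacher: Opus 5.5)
Your proposal has a genuine gap: the main argument never reaches a linear bound, and your fallback cites \cite[Lemma 8]{FP08} with the wrong content.

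\textbf{The main argument.} As you acknowledge, the min-degree reduction, the iterated separators, and the count of edges into $X$ give only $(m/n)^{1+\epsilon}\lesssim C t^{\epsilon}n'$. This bound grows with $n'$.

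The proposed repair does not close this. The $t$-subset count improves $|X|\ge D$ only to $|X|\gtrsim D(n'/t)^{1/t}$. That leaves $D^{1+\epsilon}\lesssim C t^{\epsilon+1/t}(n')^{1-1/t}$, which still grows with $n'$. So the claim that this ``should yield $D=O(t)$'' is unsupported.

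\textbf{The missing idea.} The real problem is your diagnosis of the naive recursion. You never need to charge separator vertices their degrees.

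Group the components of $G\setminus X$ into $A$ and $B$ with $|A|,|B|\le\tfrac23 n$. Every edge of $G$ then lies in $G[A\cup X]$ or in $G[B\cup X]$. Both are again $K_{t,t}$-free sphere intersection graphs. Once $\phi(n)\le\tfrac1{12}$, each has at most $(\tfrac23+\phi(n))n\le\tfrac34 n$ vertices. Their vertex counts sum to $n+|X|\le(1+\phi(n))n$.

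Hence the maximum of $e(H)/|V(H)|$ grows by at most a factor $1+\phi(n)$ from scale $\tfrac34 n$ to scale $n$. Iterating down to a threshold $n_0$ gives $e(G)\le\tfrac{n_0}{2}\prod_{i\ge 0}\bigl(1+\phi(\lceil (4/3)^i n_0\rceil)\bigr)\,n$. This is exactly \cite[Lemma 8]{FP08}, and applying it is the paper's proof.

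\textbf{The fallback.} The lemma you cite has no $K_{t,t}$ hypothesis. It is a statement about an arbitrary hereditary family with separators of size $n\phi(n)$. In your version, separators of size $O(n^{1-\epsilon})$ with a $t$-independent constant would give $O(n)$ edges. With a $t$-dependent constant they would give $O_t(n)$. Neither is the claimed $O(tn)$.

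The linear dependence on $t$ comes only from tracking the factor $t^{\epsilon}$ in Corollary~\ref{cor:ktt-free-separator-weak}:
\begin{itemize}
\item Take $\phi(n)=C_d(t/n)^{\epsilon}$ with $\epsilon=2/d(d+2)$. The threshold $\phi(n_0)=\tfrac1{12}$ is then $n_0=(12C_d)^{1/\epsilon}t=O_d(t)$.
\item Since $\phi(\lceil (4/3)^i n_0\rceil)\le\tfrac1{12}(3/4)^{i\epsilon}$, the product is at most $\exp\bigl(\tfrac1{12}\sum_{i\ge0}(3/4)^{i\epsilon}\bigr)=O_d(1)$.
\item For $n<n_0$, use the trivial bound $e(G)<n^2<n_0 n$.
\end{itemize}
These computations are the substance of the proof, and they are absent from your proposal.
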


\begin{proof}
We aim to apply \cite[Lemma 8]{FP08}. This result requires a monotone function $\phi$ so that any $n$-vertex $K_{t,t}$-free intersection graph of spheres in $\R^d$ contains a balanced separator of size at most $n\phi(n)$. By Corollary~\ref{cor:ktt-free-separator-weak}, we can take $\phi(n)=C_d(t/n)^{2/d(d+2)}$ where $C_d$ is the constant in that result. Clearly $\phi(n)$ is monotone.

Set $n_0=(12C_d)^{d(d+2)/2}t$ so that $\phi(n_0)=1/12$. Then \cite[Lemma 8]{FP08} implies that every $K_{t,t}$-free intersection graph of spheres in $\R^d$ on $n\geq n_0$ vertices has at most $Cn_0n/2$ edges where
\[C=\prod_{i=0}^\infty \paren{1+\phi\paren{\ceil{(4/3)^in_0}}}.\]
We can bound
\[C\leq\exp\paren{\sum_{i=0}^\infty \phi\paren{\ceil{(4/3)^in_0}}}\leq\exp\paren{2C_d\sum_{i=0}^\infty \paren{\paren{\frac34}^i\frac{t}{n_0}}^{2/d(d+2)}}.\]
As $n_0=O_d(t)$, this geometric series sums to $O_d(1)$, showing that $C=O_d(1)$. This gives the desired bound on the number of edges when $n\geq n_0$.

Note that if $n<n_0$, then we can trivially bound the number of edges by $n^2<n_0n=O_d(tn)$.
\end{proof}

Combining Theorem~\ref{thm:sphere-separator-main-intro} and Corollary~\ref{cor:edge-bound-ktt-free}, we immediately conclude the following strengthening of Corollary~\ref{cor:ktt-free-separator-weak}. This result improves upon the sphere separator result of Davies--Georgakopoulos--Hatzel--McCarty \cite[Theorem 1]{DGHM25}, proving a strengthening of \cite[Conjecture 20]{DGHM25}.

\begin{corollary}
\label{cor:ktt-free-separator}
Let $\fS$ be a collection of $n$ spheres in $\R^d$. If $G(\fS)$ is $K_{t,t}$-free, then it has a balanced separator of size $O_d(t^{1/d}n^{1-1/d})$.
\end{corollary}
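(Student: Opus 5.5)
The plan is to combine Theorem~\ref{thm:sphere-separator-main-intro} with the edge bound of Corollary~\ref{cor:edge-bound-ktt-free}. Since $G(\fS)$ is $K_{t,t}$-free, Corollary~\ref{cor:edge-bound-ktt-free} gives $m=O_d(tn)$ edges. Substituting into Theorem~\ref{thm:sphere-separator-main-intro} yields a balanced separator of size
\[O_d\paren{m^{1/d}n^{1-2/d}}=O_d\paren{(tn)^{1/d}n^{1-2/d}}=O_d\paren{t^{1/d}n^{1-1/d}}.\]

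The only points to check are edge cases. First, if $t>n$, the bound $t^{1/d}n^{1-1/d}\ge n$ holds trivially, since the whole vertex set is a separator. Second, the constant in Corollary~\ref{cor:edge-bound-ktt-free} depends only on $d$, so the final constant depends only on $d$.

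Equivalently, one could apply the degree-sequence form, Theorem~\ref{thm:sphere-separator-main}, together with H\"older's inequality exactly as in Corollary~\ref{cor:ball-separator-edge}: $\sum_S\deg(S)^{1/(d-1)}\le (2m)^{1/(d-1)}n^{(d-2)/(d-1)}$. This gives the same result.

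There is no real obstacle here. All the difficulty lies in the earlier results, namely the sphere separator theorem and the bootstrapped edge bound, and this corollary is just their composition.
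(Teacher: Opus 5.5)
Your proposal is correct and is exactly the paper's argument: the paper obtains this corollary immediately by substituting the edge bound $m=O_d(tn)$ from Corollary~\ref{cor:edge-bound-ktt-free} into Theorem~\ref{thm:sphere-separator-main-intro}. Your edge-case remarks are harmless but unnecessary, since the bound $O_d(m^{1/d}n^{1-2/d})$ is monotone in $m$.
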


\begin{remark}
The same proof also gives Corollaries~\ref{cor:edge-bound-ktt-free} and \ref{cor:ktt-free-separator} for intersection graphs of balls. However these results actually have significantly easier proofs. For example, let $\fB$ be a collection of balls in $\R^d$. If $\fB$ is $(2t-1)$-ply, then Theorem~\ref{thm:ball-separator-ply} already implies that $G(\fB)$ has a balanced separator of size $O_d(t^{1/d}n^{1-1/d})$. Otherwise, there is a point in $2t$ balls of $\fB$, implying that $G(\fB)$ contains $K_{2t}$ which contains $K_{t,t}$.
\end{remark}

\section{Further results}
\label{sec:further}

We start by giving a construction that shows that our main results are asymptotically optimal. We then generalize these results to a larger class of geometric objects.

\begin{proposition}
\label{prop:lower-bound}
For each $n,d\geq 1$ and $n\leq m\leq \binom n2$, there exists an intersection graph of $n$ balls or spheres in $\R^d$ with at most $m$ edges for which every balanced separator has size at least $\Omega_d(m^{1/d}n^{1-2/d})$.
\end{proposition}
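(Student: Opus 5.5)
The plan is a disjoint union of cliques glued to a grid-like graph. Set $k=\lceil m/n\rceil$, so $1\le k\le n$. Choose $N\approx n/k$ and consider the $d$-dimensional grid $[L]^d$ with $L^d\approx N$. Place at each grid point a cluster of $k$ objects, so there are $n$ objects in total. For balls, we take $k$ identical (or slightly perturbed) balls of radius slightly more than $1/2$ centered at each grid point of $\mathbb Z^d$. Then balls at adjacent grid points intersect and balls at non-adjacent grid points do not, provided the radius $r$ satisfies $1/2<r<\sqrt2/2$. The intersection graph is the blow-up of the grid graph in which each vertex is replaced by a clique $K_k$ and each edge by a $K_{k,k}$. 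Its number of edges is $O_d(k^2N)=O_d(kn)=O_d(m)$, and rescaling the parameters by a constant gives at most $m$ edges. For spheres, we use the same construction with concentric spheres of radii in $(1/2,\sqrt2/2)$, all centered at a grid point. Concentric spheres do not intersect, so within a cluster there are no edges. Spheres at adjacent grid points (distance $1$) with radii $r,r'$ in this range satisfy $|r-r'|<1<r+r'$, so they intersect. Spheres at distance at least $\sqrt2$ do not intersect, since $r+r'<\sqrt2$. This gives the blow-up of the grid with independent sets of size $k$ and edges replaced by $K_{k,k}$, and the edge count is still $O_d(k^2N)$.

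Next I would lower-bound separators. Let $\fX$ be a balanced separator of the blown-up graph $H$. Call a grid point \emph{hit} if some object in its cluster lies in $\fX$. Let $Y$ be the set of hit grid points, so $|Y|\le|\fX|$. Removing $Y$ from the grid, every component of $G\setminus\fX$ projects into one component of the grid minus $Y$. Conversely, each component of the grid minus $Y$ lifts to a connected set of full clusters: within a cluster, the objects are connected through any neighboring cluster, or form a clique in the ball case. An isolated non-hit grid point is a singleton component of size at most $k$, and this is harmless. Hence the components of the grid minus $Y$ each have at most $\tfrac23 n/k+O(1)$ grid points, up to points adjacent to $Y$. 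So $Y$ is essentially a balanced-type separator of the grid $[L]^d$, possibly after a mild adjustment using Lemma~\ref{lem:balancing-separators}-style constants. The standard isoperimetric inequality for grids (for example via Bollobás–Leader edge/vertex isoperimetry, or a Loomis–Whitney projection argument) says that any set separating $[L]^d$ into pieces of size at most $cL^d$ has size $\Omega_d(L^{d-1})=\Omega_d(N^{1-1/d})$. Therefore
\[|\fX|\ge|Y|=\Omega_d\big((n/k)^{1-1/d}\big)=\Omega_d\big(n^{1-1/d}(n/m)^{1-1/d}\big)?\]
I will check this against the target $m^{1/d}n^{1-2/d}=k^{1/d}n^{1-1/d}$. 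It does not match, so counting grid points is too weak: each hit grid point must be weighted by $k$. I would therefore refine the argument. A grid point that is only partially hit, with fewer than $k$ of its objects in $\fX$, still connects its neighbors through its remaining objects. So only \emph{fully} removed clusters actually separate anything. Let $Y$ instead be the set of grid points whose entire cluster lies in $\fX$, so that $|\fX|\ge k|Y|$. Components of the grid minus $Y$ lift to connected sets. So $Y$ separates the grid, giving $|Y|=\Omega_d((n/k)^{1-1/d})$ and hence $|\fX|=\Omega_d(k^{1/d}n^{1-1/d})=\Omega_d(m^{1/d}n^{1-2/d})$, as desired.

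The main obstacle is the grid isoperimetric step with the correct balance constant: components of size at most $\tfrac23$ of the vertices must force a separator of size $\Omega_d(L^{d-1})$. I would prove it by applying vertex isoperimetry in $[L]^d$ to a union $A$ of components having between $\tfrac16$ and $\tfrac56$ of the grid points, obtained greedily. Its vertex boundary lies in $Y$ and has size $\Omega_d(\min(|A|,L^d-|A|)^{(d-1)/d})$. I also need to handle degenerate ranges, such as $k$ near $n$ where the grid is trivial and the bound becomes $\Omega(n^{?})$. In that range a single clique (or $K_{n/2,n/2}$ for spheres) needs separators of size $\Omega(n)$, which already matches the target.
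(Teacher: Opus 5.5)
Your argument is essentially the paper's. The subcubes of side $r'=r/(2\sqrt d)$ in the paper's radius-$r$ grid graph play the role of your clusters of size $k\approx m/n$. The paper likewise charges only cells lying entirely in $\fX$ before applying isoperimetry on the coarse grid. Your co-located clusters make the blow-up exact. In the paper's construction, all points of a cell are pairwise adjacent for balls and spheres alike, so the split-cluster phenomenon you have to handle for concentric spheres never arises.

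One step, however, is false as stated. With $Y$ the set of fully deleted clusters, it does not follow that the components of the grid minus $Y$ are small. Such a component $C'$ with $|C'|\ge 2$ does lift into a single component of $G\setminus\fX$. But that component can contain far fewer than $k|C'|$ objects, because clusters of $C'$ may be only partially deleted. For example, let $k\ge 2$ and let $\fX$ consist of $k-1$ objects from every cluster. Then $G\setminus\fX$ is a single copy of the grid on $N=n/k\le n/2$ vertices, so $\fX$ is a balanced separator, yet $Y=\emptyset$.

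The fix is the smallness reduction the paper makes: it assumes $\fX$ contains at most a twelfth of the vertices, which is what makes its set $B$ large. In your setting, if $|\fX|\ge n/6$ you are done, because $n\ge k^{1/d}n^{1-1/d}$. Otherwise the lifted component has at least $k|C'|-|\fX|$ objects, so $|C'|\le \tfrac23N+|\fX|/k\le\tfrac56N$, and singleton components are trivially small. The grid minus $Y$ has at least $\tfrac56N$ points, since $|Y|\le|\fX|/k\le N/6$. So your greedy union of components with between $N/6$ and $5N/6$ grid points exists, its vertex boundary lies in $Y$, and the rest of your argument goes through.
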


\begin{proof}
Let $V=\{1,\ldots,k\}^d$ for $k=\floor{n^{1/d}}$. Define $G=(V,E)$ where $x,y\in V$ are adjacent if $\|x-y\|\leq r$. Choosing $1\leq r=\Theta_d((m/n)^{1/d})$ appropriately, we can ensure that $G$ has $\Theta_d(m)$ edges. Note that $G$ is both the intersection graph of the radius $r/2$ balls centered at the points of $V$ and the intersection graph of the radius $r/2$ spheres centered at the points of $V$. We will show that every balanced separator of $G$ has size at least $\Omega_d(m^{1/d}n^{1-2/d})$. 

We first handle the case when $r=1$. Suppose that $\fX$ is a balanced separator with $|\fX|\leq k^d/3$. Let $A\subset V$ be the union of some connected components of $G\setminus\fX$, satisfying $k^d/3\leq|A|\leq 2k^d/3$. When $r=1$, the graph $G$ is known as the grid graph on $\{1,\ldots,k\}^d$. Then the isoperimetric inequality on the grid graph \cite{BL91} implies that there are at least $k^{d-1}$ edges between $A$ and $V\setminus A$. However, all such edges must have one endpoint in $\fX$. Since the maximum degree of $G$ is $2d$, we conclude that $|\fX|\geq k^{d-1}/2d\geq\Omega_d(n^{1-1/d})$, as desired.

To handle the general case, let $r'=r/(2\sqrt{d})$. For simplicity, assume that $r'$ is an integer that divides $k$, say $k=r'k'$. Divide $\{1,\ldots,k\}^d$ into $(k')^d$ subcubes of side length $r'$. Note that every vertex in one subcube is adjacent in $G$ to every vertex in an adjacent subcube. Similar to before, suppose that $\fX$ is a balanced separator with $|\fX|\leq k^d/12$ and $A$ is a union of connected components of $G\setminus \fX$ with $k^d/3\leq |A|\leq 2k^d/3$. Write $B=V\setminus(\fX\cup A)$. We know $|B|\geq k^d/4$. Define $A'\subseteq\{1,\ldots,k'\}^d$ where a point lies in $A'$ if the corresponding $r'\times\cdots\times r'$ subcube of $\{1,\ldots,k\}^d$ contains at least one point of $A$. Define $B'\subseteq\{1,\ldots,k'\}^d$ similarly. First note that $A'$ and $B'$ are disjoint -- if they shared a vertex then $A,B$ would contain points in the same subcube which have distance at most $\sqrt{d}r'<r$.

Write $X'=\{1,\ldots,k'\}^d\setminus(A'\cup B')$. Note that $|A'|\geq|A|/(r')^d\geq (k')^d/3$ and similarly $|B'|\geq|B|/(r')^d\geq (k')^d/4$. Thus $|A'|\leq (k')^d-|B'|\leq 3(k')^d/4$. Applying the isoperimetric inequality to the grid graph on $\{1,\ldots,k'\}^d$, we conclude that there are at least $(k')^{d-1}$ edges between $A'$ and $B'\cup X'$. However, there are no edges between $A',B'$ because any two points in two adjacent subcube are at distance at most $2\sqrt{d}r'=r$. Thus we conclude that $|X'|\geq(k')^{d-1}/(2d)$. Note though that a point lies in $X'$ only if the corresponding subcube lies entirely in $\fX$. We thus see that $|\fX|\geq(r')^d(k')^{d-1}/(2d)=\Theta_d(m^{1/d}n^{1-2/d})$.
\end{proof}

Our main results on ball and sphere separators, Theorems~\ref{thm:ball-separator-lp}~and~\ref{thm:sphere-separator-main}, hold for more general classes of geometric objects. We say that a set $B\subset\R^d$ is \emph{$C$-fat} if $B$ contains a ball of radius $r$ and is contained inside a ball of radius $Cr$ for some $r$. Our results hold for fat convex bodies and for the boundaries of fat convex bodies. (See \cite{HQ17,BBKMZ20} for related results under slightly different hypotheses.)

\begin{theorem}
\label{thm:ball-separator-fat-convex}
Let $\fB$ be a collection of $n$ $C$-fat convex bodies in $\R^d$. Then $G(\fB)$ has a balanced separator of size at most
\[O_{d,C}\paren{\paren{\sum_{B\in\fB}p(B)^{\tfrac1{d-1}}}^{1-\tfrac1d}}.\]
In particular, if $G(\fB)$ has $m$ edges, then it has a balanced separator of size $O_{d}(m^{1/d}n^{1-2/d})$.
\end{theorem}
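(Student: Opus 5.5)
We will mimic the proof of Theorem~\ref{thm:ball-separator-lp}, replacing balls by $C$-fat convex bodies and paying constants depending on $d$ and $C$. For each $B\in\fB$ fix an inner ball of radius $\rho(B)$ and an outer ball of radius $C\rho(B)$ with $B$ sandwiched between them, and use the center $c(B)$ of the inner ball as the ``center'' of $B$. By Lemma~\ref{lem:balancing-separators} it suffices to produce a $(1-5^{-d})$-balanced separator of the desired size. As before, let $B_0$ be a smallest ball containing at least $5^{-d}n$ centers; rescale so that $B_0=B(1)$. Pick $r\in[1,2]$ uniformly at random and let $\fX$ be the set of bodies meeting the sphere $\partial B(r)$. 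Since the bodies are connected, every component of $G(\fB)\setminus\fX$ lies entirely inside or entirely outside $B(r)$. The covering argument from before then shows that each side contains at most $(1-5^{-d})n$ centers, so $\fX$ is $(1-5^{-d})$-balanced. This step uses nothing about shape.

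The main work is bounding $\E|\fX|$. A body $B$ meets $\partial B(r)$ only if $r$ lies in the radial projection interval $I_B=\{\|x\|:x\in B\}\cap[1,2]$, which is an interval by connectedness. So $\Pr[B\in\fX]\le |I_B|$. We split into two cases.

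\emph{Large bodies.} Suppose $\rho(B)\ge 1$. Then $B$ contains a ball of radius $1$, hence at least one unit of volume near $B(2)$. Any such $B$ meeting $B(2)$ has $B\cap B(4)$ containing a ball of radius $\tfrac12$; this uses convexity, via the convex hull of a point in $B(2)$ and the inner ball. Volume counting, weighted by ply as in the displayed integral estimate, gives $\sum 1/p(B)=O_d(1)$ over these bodies. Since each contributes probability at most $1$, their contribution is at most $\sum 1 \le (\sum 1/p(B))^{1/d}(\sum p(B)^{1/(d-1)})^{(d-1)/d}$ by Hölder, which is of the right form.

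\emph{Small bodies.} Suppose $\rho(B)<1$. Then $|I_B|\le 2C\rho(B)$. Moreover $B\cap B(2)$ contains a ball of radius $\rho(B)/O(C)$: take the convex hull of a point of $B\cap B(2)$ with the inner ball, and shrink toward that point, using fatness and convexity. Hence
\[\vol(B\cap B(4))\ge \Omega_{d,C}(\rho(B)^d).\]
Running the ply computation
\[\vol(B(4))\ge\sum_B \vol(B\cap B(4))/p(B)\]
gives $\sum \rho(B)^d/p(B)=O_{d,C}(1)$. The same Hölder step then gives
\[\E|\fX|\le O_{d,C}\Bigl(\bigl(\textstyle\sum p(B)^{1/(d-1)}\bigr)^{(d-1)/d}\Bigr).\]
Here $p(B)$ is the maximum ply over $x\in B$, and ply can only decrease after restricting to $B(4)$.

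The main obstacle is the geometric lemma that a fat convex body meeting $B(2)$ contains a ball of radius comparable to $\min(\rho(B),1)$ inside $B(4)$. We would prove it using the convex hull of the point and the inner ball, which forms a cone with aperture bounded below in terms of $C$. The ``in particular'' statement follows as in Corollary~\ref{cor:ball-separator-edge}. For convex bodies, $p(B)\le \deg(B)+1$ since all bodies through a point of $B$ meet $B$. Hölder's inequality then gives $O(m^{1/d}n^{1-2/d})$, with the constant depending on $C$ as well.
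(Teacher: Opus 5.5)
Your proof is correct, and it handles the one genuinely new step differently from the paper. The paper works with the radial interval $x_B\cap[1,2]$ directly. For each body meeting the random sphere with positive probability, it constructs a genuine ball $B'\subseteq B\cap B(2)$ of radius $\Omega_C(|x_B\cap[1,2]|)$. It treats $B\subseteq B(2)$ and $B\not\subseteq B(2)$ separately; in the latter case it dilates the hull of a point $q$ of minimal norm and an inner ball toward $q$. The new family then consists of balls inside $B(2)$ with ply no larger than before, and the proof of Theorem~\ref{thm:ball-separator-lp} runs verbatim.

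You instead use $\min(\rho(B),1)$ as the common proxy on both sides of H\"older. The bound $\Pr[B\in\fX]\le 2C\min(\rho(B),1)$ is immediate from the outer ball. The volume bound needs only one point $q\in B\cap B(2)$ and the inner ball. The cost is a large/small case split and counting ply-weighted volume in $B(4)$ instead of $B(2)$. Your route decouples the geometric lemma from the random sphere and is somewhat simpler. The paper's route gives an exact reduction to the ball case, with radii matched to the actual crossing probabilities.

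Two intermediate claims are misstated, though neither affects the argument.
\begin{itemize}
\item In the small case, $B\cap B(2)$ need not contain any ball; take $B$ externally tangent to $\partial B(2)$. Only your $B(4)$ statement is used, and that one holds.
\item In the large case, radius $\tfrac12$ fails when $C$ is large; take a thin cone with apex on $\partial B(2)$ pointing outward.
\end{itemize}
One fix covers both. Every point of $B$ is within $2C\rho(B)$ of $q$, since $B$ lies in a ball of radius $C\rho(B)$. So shrinking the convex hull of $q$ and the inner ball toward $q$ by the factor $\min(1,1/(C\rho(B)))$ gives a ball of radius at least $\min(\rho(B),1)/C$ inside $B\cap B(4)$. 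This yields $\sum\min(\rho(B),1)^d/p(B)\le (4C)^d$ for all bodies meeting $B(2)$.

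You are also right that the constant in the ``in particular'' statement must depend on $C$.
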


\begin{proof}
For each $B\in\fB$, pick an arbitrary point $p\in\inter(B)$ that we call the \emph{center} of $B$. With this definition, we proceed with the proof of Theorem~\ref{thm:ball-separator-lp} in the same way. The same randomized procedure produces a $(1-5^{-d})$-balanced separator.

For each $B\in\fB$, let $x_B\subset\R_{\geqslant0}$ be the set of radii $r$ so that the origin-centered sphere of radius $r$ intersects $B$. We wish to show that if $x_B\cap[1,2]$ is non-empty, $B$ contains a genuine ball $B'$ which is contained in $B(2)$ such that $r(B')\geq\Omega_C\paren{|x_B\cap[1,2]|}$.

First, if $x_B\subseteq[0,2]$, then $B$ is contained in $B(2)$. Now $B$ contains a point of norm $\max(x_B)$ and a point of norm $\min(x_B)$, i.e., it contains two points at distance at least $\max(x_B)-\min(x_B)\geq |x_B|$. This means that any ball containing $B$ must have radius at least $|x_B|/2$. By hypothesis, $B$ contains a ball of radius $|x_B|/2C$. Letting $B'$ be this ball, we are done since in this case $B'\subseteq B$ is contained in $B(2)$.

Now suppose $x_B\not\subseteq[0,2]$, but $x_B\cap [0,2]\neq\emptyset$. By the same argument, $B$ contains some ball (say $\tilde B$) of radius $(\max(x_B)-\min(x_B))/2C$ centered at some point $p$ with $\|p\|\leq \max(x_B)$. By assumption, $B$ contains a point $q$ with $\|q\|=\min(x_B)\leq 2$. (See Figure~\ref{fig:expected-S-size-fat-convex}.) Then the convexity of $B$ implies that $B$ contains the convex hull of $\{q\}\cup\tilde B$. Elementary geometry implies that this convex hull contains a ball $B'$ of radius $\tfrac{2-\min(x_B)}{2C+1}\geq \tfrac{|x_B\cap[1,2]|}{2C+1}$ that is contained in $B(2)$. (To see this, dilate the convex hull of $\{q\}\cup\tilde B$ by a factor of $(2-\|q\|)/(\|p\|+r(\tilde B)-\|q\|)$ so that it lies within $B(2)$.) 

Thus we have constructed a set $\fB'$ contained in $B(2)$ so that $\E[|\fX|]\geq\Omega_C\paren{\sum_{B\in\fB'}r(B)}$. Note that $\fB'$ consists of genuine balls, so the rest of the proof applies unchanged.
\end{proof}

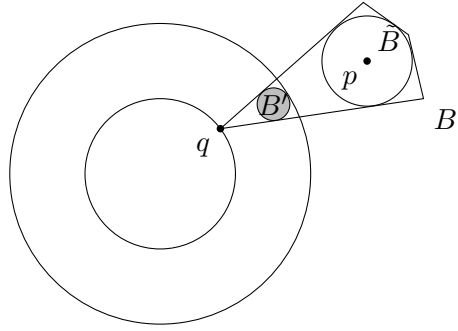
\begin{figure}
\centering
\begin{tikzpicture}

\draw(0,0) circle (1);
\draw(0,0) circle (2);

\coordinate (Q) at (0.8, 0.6);
\coordinate (C) at (3.3, 1.85); 
\coordinate (A) at (2.7, 2.28);
\coordinate (B) at (3.5, 1); 
\coordinate (P) at (2.75,1.5);
\coordinate (R) at ({0.8*0.638+2.75*0.362},{0.6*0.638+1.5*0.362});

\draw(P) circle (0.6);
\draw[black] (Q) -- (A) -- (C) -- (B) -- cycle;

\draw[black, fill=lightgray] (R) circle ({0.362*0.6});
\node at (R) {$B'$};

\node[] at (P) [above right] {$\tilde B$};
\node[] at (B) [below right] {$B$};

\node[fill=black, circle, inner sep=1pt] at (Q) {};
\node[] at (Q) [below left] {$q$};

\node[fill=black, circle, inner sep=1pt] at (P) {};
\node[] at (P) [below left] {$p$};

\end{tikzpicture}
\caption{\label{fig:expected-S-size-fat-convex}
The construction of $B'$ inside of $B\cap B(2)$.}
\end{figure}

\begin{theorem}
\label{thm:sphere-separator-fat-convex}
Let $\fS$ be a collection of $n$ boundaries of $C$-fat convex bodies in $\R^d$. Then $G(\fS)$ has a balanced separator of size at most
\[O_{d,C}\paren{\paren{\sum_{S\in\fS}\deg(S)^{\tfrac1{d-1}}}^{1-\tfrac1d}}.\]
In particular, if $G(\fS)$ has $m$ edges, then it has a balanced separator of size $O_{d}(m^{1/d}n^{1-2/d})$.
\end{theorem}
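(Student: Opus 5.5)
We follow the proof of Theorem~\ref{thm:sphere-separator-main} step by step, replacing spheres by boundaries $\partial K$ of $C$-fat convex bodies $K$. For each body we fix a \emph{center}, namely the center of an inscribed ball of radius $r_K$ with $K$ contained in a ball of radius $Cr_K$. We say one boundary is \emph{inside} another if it lies in the interior of the corresponding body.

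The combinatorial skeleton carries over directly.
\begin{itemize}
\item Lemma~\ref{lem:balancing-separators} reduces the problem to finding a $(1-c_{d,C})$-balanced separator.
\item Step~1 (removing the top-degree vertices) is purely graph-theoretic and needs no change.
\item Lemma~\ref{lem:nested-sphere-separator} holds verbatim. If two disjoint convex boundaries both enclose $p$, one body contains the other, since disjoint boundaries of convex bodies are either nested or have disjoint bodies. Containment is still a partial order.
\item Step~2 uses only that a boundary $S_0$ of minimal ``size'' containing $4c n$ centers splits non-neighbours into inside and outside. Here we minimise the radius of the circumscribed ball instead of the radius of $S_0$, which is still well-defined.
\item In Step~3, $B_{\mathrm{in}}$ is again a minimal ball containing $4cn$ centers. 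The covering argument showing $B_{\mathrm{out}}$ holds $O_d(cn)$ centers is unchanged. So is the observation that no boundary in $\fS_2$ contains $B_{\mathrm{in}}$, so the $4cn$ boundaries with centers in $B_{\mathrm{in}}$ meet $B_{\mathrm{in}}$.
\end{itemize}

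The geometric inputs need to be redone in two places. First, in Claim~\ref{claim:expected-x3-size}, $\tilde r(S)$ becomes the radius of the smallest ball containing $K\cap S_{\mathrm{rand}}$. We need
\[\Pr[\partial K\cap S_{\mathrm{rand}}\neq\emptyset,\ \tilde r(S)\le t]\le O_C(t/R).\]
Let $x_K$ be the set of radii $\rho\in[R,2R]$ for which the sphere of radius $\rho$ meets $\partial K$. If that sphere meets $K$ in a set of small circumradius but meets $\partial K$, we argue as follows. Either $\rho$ is within $O_C(t)$ of $\min\|y\|$ or $\max\|y\|$ over $y\in K$. Otherwise, by convexity and fatness, $K\cap S_\rho$ contains a cap of radius $\Omega_C(\min(r_K,\operatorname{dist}))$, using the argument of Theorem~\ref{thm:ball-separator-fat-convex} that the hull of $q$ and the inscribed ball contains a fat region. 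The other possibility is that $K$ itself has diameter $O_C(t)$. In every case the bad $\rho$ lie in a set of measure $O_C(t)$, and the union bound then gives $\E|\fX_3|\le\Sigma/4$ as before.

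Second, in Step~4 we greedily pick bodies meeting $S_{\mathrm{rand}}$, choosing at each stage one of maximal circumradius. We remove everything meeting $K_i$ and add the neighbours of $\partial K_i$. The $K_i$ chosen this way need not be disjoint, since a smaller body could contain a larger one's boundary only if it is geometrically large. To fix this, we select by maximal inradius and note that circumradius and inradius agree up to the factor $C$. If $K_j$ (with $j>i$) is disjoint from $\partial K_i$ and not inside $K_i$, then either $K_j\supset K_i$, which is impossible since $r_{K_j}\le r_{K_i}$ forces $\operatorname{vol}K_j\le C^d\operatorname{vol}K_i$ (comparable but possibly larger, so this needs care), or the two bodies are disjoint.

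To avoid that issue, we remove from $\fS_4$ every body meeting the dilate $K_i^{\ast}$ of the circumscribed ball. We then add to $\fX_4$ all boundaries meeting $\partial K_i$ together with those inside $K_i^\ast$ that cross $S_{\mathrm{rand}}$. Since we only need disjointness up to bounded overlap, it suffices that the caps $K_i\cap S_{\mathrm{rand}}$ have $O_{d,C}(1)$ overlap. The area bound
\[\sum \tilde r(S_i)^{d-1}\le O_{d,C}(R^{d-1})\]
then follows because each cap of circumradius $\tilde r$ contains a $(d-1)$-dimensional region of area $\Omega_C(\tilde r^{d-1})$ by fatness and convexity. This gives Claim~\ref{claim:x4-size} with constants depending on $C$. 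The final case analysis of components (inside some $K_i$, inside $S_{\mathrm{rand}}$, outside) is unchanged, again using that no body in $\fS_2$ encloses $4cn$ centers.

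The main obstacle is Step~4: establishing the bounded-overlap packing of the selected caps on $S_{\mathrm{rand}}$, and checking that the components trapped inside a $K_i$ remain small. The latter needs the Step~2 property to be stated for bodies (or their circumscribed dilates) rather than spheres.
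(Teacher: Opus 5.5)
Steps~1--3, the nested-or-disjoint dichotomy behind Lemma~\ref{lem:nested-sphere-separator}, and your cone argument for the probability estimate all match the paper. However, Step~4 and Claim~\ref{claim:x4-size} do not go through as you set them up, and you leave them unresolved.

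\textbf{The definition of $\tilde r(S)$.} You define $\tilde r(S)$ as the circumradius of $K\cap S_{\mathrm{rand}}$, and then assert that such a set contains a region of $(d-1)$-volume $\Omega_C(\tilde r^{d-1})$. This is false for $d\ge 3$: a cube whose edge barely pokes through $S_{\mathrm{rand}}$ meets it in a thin strip. That strip has circumradius about half the side length but arbitrarily small area. The packing inequality $\sum_i\tilde r(S_i)^{d-1}\le O_{d,C}(R^{d-1})$ also fails for disjoint fat bodies under this definition. Cubes standing on a $(d-2)$-dimensional face leave wedge-shaped gaps beneath them that hold smaller such cubes, and a dyadic arrangement makes the sum grow linearly in the number of scales.

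The paper instead \emph{defines} $\tilde r(S)$ by $\vol_{d-1}(K\cap S_{\mathrm{rand}})=\tau_{d-1}\tilde r(S)^{d-1}$. With this definition, the packing bound in Claim~\ref{claim:x4-size} is immediate from disjointness of the selected bodies. All the fatness and convexity work then moves into showing $\Pr[\partial K\cap S_{\mathrm{rand}}\neq\emptyset,\ \tilde r(S)\le t]=O_C(t/R)$. Your cone argument already proves this, because it exhibits a genuine cap inside $K\cap S_\rho$ for all $\rho$ outside a set of measure $O_C(t)$.

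\textbf{Disjointness in Step~4.} Your worry here is unfounded, and the dilation fix breaks the argument. Choose $K_i$ of maximal circumradius or inradius, or, as the paper does, maximal in a linear extension of the containment order. For $j>i$, $\partial K_j$ misses $K_i$, so by the same dichotomy either $K_i\cap K_j=\emptyset$ or $K_i\subset\inter K_j$. The latter makes $K_j$ strictly larger in any of these orders. That contradicts the choice of $K_i$, since $K_j$ was still available at stage $i$.

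Removing everything that meets a dilate $K_i^\ast$ instead causes three problems:
\begin{enumerate}
\item Boundaries inside $K_i^\ast$ that cross $S_{\mathrm{rand}}$ need not be neighbours of $\partial K_i$, so $|\fX_4|$ is no longer bounded by $\sum_i\deg(\partial K_i)$.
\item Bodies meeting $K_i^\ast$ that are neither inside it nor adjacent to $\partial K_i$ are discarded without being separated, so they may still cross $S_{\mathrm{rand}}$.
\item Components inside $K_i^\ast$ are not controlled by the Step~2 property, which concerns only bodies of $\fS_2$.
\end{enumerate}
With the original rule (remove everything meeting $K_i$, put $N(\partial K_i)$ into $\fX_4$) together with the area-based $\tilde r$, the final component analysis is verbatim that of Theorem~\ref{thm:sphere-separator-main}.
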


\begin{proof}
For each $S\in\fS$, pick an arbitrary point in the interior of $S$ to be its center. Consider the partial order on $\fS$ where $S\prec S'$ if $S$ is contained in the interior of $S'$. Pick an arbitrary linear extension of this partial order. When we talk about the $S\in\fS$ of maximal radius (or minimal radius) we mean maximal with respect to this linear extension. With these two definitions, we can proceed with the proof of Theorem~\ref{thm:sphere-separator-main} in the same way. The same randomized procedure produces a $(1-c_d)$-balanced separator.

The main difference in the analysis is understanding $\E[|\fX_3|]$. For each $S$, define $\tilde r(S)$ so that the $(d-1)$-volume of the intersection of the convex body bounded by $S$ and the sphere $S_{\mathrm{rand}}$ is $\tau_{d-1}\tilde r(S)^{d-1}$. We claim that $\Pr[\tilde r(S)\leq \tilde r]\leq O_C(\tilde r/R)$. 

Fix $\tilde r$. Change coordinates so that $S_{\mathrm{rand}}$ is centered at the origin. Write $B$ for the convex body with boundary $S$.  Let $r_{\min}=\min_{x\in B}\|x\|$ and $r_{\max}=\max_{x\in B}\|x\|$. Pick $p,q\in B$ with $\|p\|=r_{\min}$ and $\|q\|=r_{\max}$. Since $\|p-q\|\geq r_{\max}-r_{\min}$, we see that the smallest ball that contains $B$ has radius at least $(r_{\max}-r_{\min})/2$. By the $C$-fat hypothesis, $B$ contains some ball $B'$ centered at a point $o$ with radius $(r_{\max}-r_{\min})/2C$. Now $B$ contains the convex hull of $\{p,q\}\cup B'$ which contains two cones with apexes $p,q$ and with bases some disk of radius $(r_{\max}-r_{\min})/2C$ centered at $o$. This implies that for each $\epsilon>0$, the body $B$ contains a ball of radius $\epsilon(r_{\max}-r_{\min})/2C$ centered at a point of distance $r_{\min}+\epsilon(r_{\max}-r_{\min})$ from the origin and another ball of the same radius centered at a point of distance $r_{\max}-\epsilon(r_{\max}-r_{\min})$ from the origin. (See Figure~\ref{fig:expected-x3-size-fat-convex-boundary}.) In particular, if $\epsilon(r_{\max}-r_{\min})/2C\geq \tilde r$, then the intersection of $B$ and the spheres of radius $r_{\min}+\epsilon(r_{\max}-r_{\min})$ and $r_{\max}-\epsilon(r_{\max}-r_{\min})$ both contain a spherical cap of radius at least $\epsilon(r_{\max}-r_{\min})/2C\geq \tilde r$, so these intersections have $(d-1)$-volume at least $\tau_{d-1}\tilde r^{d-1}$. The first inequality holds for $\epsilon\geq 2C\tilde r/(r_{\max}-r_{\min})$, so
\[\Pr[\tilde r(S)\leq \tilde r]\leq\frac{2(r_{\max}-r_{\min})\cdot 2C\tilde r/(r_{\max}-r_{\min})}{R}=\frac{4C\tilde r}{R},\]
as desired.

This gives us the bound $\E[|\fX_3|]\leq \Sigma/4$ as in Claim~\ref{claim:expected-x3-size}. Then the bound $|\fX_4|\leq \Sigma/4$ also holds since we defined $\tilde r(S)$ so that the proof of Claim~\ref{claim:x4-size} still goes through.
\end{proof}

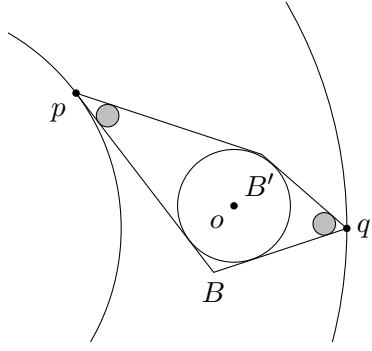
\begin{figure}
\centering
\begin{tikzpicture}[scale=3]
\clip (0.5,-0.5) rectangle + (2,1.5);

\draw(0,0) circle (1);
\draw(0,0) circle (2);

\coordinate (P) at (0.8, 0.6);
\coordinate (Q) at (2, 0); 
\coordinate (A) at (1.623, 0.328);
\coordinate (B) at (1.41, -0.195); 
\coordinate (O) at (1.5, 0.1);
\coordinate (M) at ({0.8*0.8+1.5*0.2},{0.6*0.8+0.1*0.2});
\coordinate (N) at ({2*0.8+1.5*0.2},{0*0.8+0.1*0.2});

\draw(O) circle (0.25);
\draw[black, fill=lightgray](M) circle (0.05);
\draw[black, fill=lightgray](N) circle (0.05);
\draw[black] (P) -- (A) -- (Q) -- (B) -- cycle;

\node[fill=black, circle, inner sep=1pt] at (O) {};
\node[] at (O) [below left] {$o$};
\node[] at (O) [above right] {$B'$};

\node[fill=black, circle, inner sep=1pt] at (Q) {};
\node[] at (Q) [right] {$q$};

\node[fill=black, circle, inner sep=1pt] at (P) {};
\node[] at (P) [below left] {$p$};

\node[] at (B) [below] {$B$};

\end{tikzpicture}
\caption{\label{fig:expected-x3-size-fat-convex-boundary}
Bounding $\Pr[\tilde r(S)\leq \tilde r]$.}
\end{figure}

With these results, Corollary~\ref{cor:edge-bound-ktt-free} also holds for fat convex bodies if the bodies are also semialgebraic sets of bounded description complexity.

We also note that both of the assumptions, convexity and fatness, are necessary. Indeed, for $d\geq 3$, every graph is the intersection graph of convex bodies in $\R^d$ and also every graph is the intersection graph of fat connected bodies in $\R^d$. The former fact is classical (see \cite{Tie05}); we prove the latter fact here.

\begin{proposition}
\label{prop:any-graph-fat-non-convex}
For $d\geq 3$, $\epsilon>0$, and any graph $G$, there exists a collection of $(1+\epsilon)$-fat connected bodies in $\R^d$ whose intersection graph is $G$.
\end{proposition}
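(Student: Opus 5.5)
We will build the bodies by taking one small ball per vertex and attaching thin tubes, keeping the added material negligible in volume and diameter. By monotonicity of fatness we may take $\epsilon<1$. Let $G$ have vertex set $\{1,\ldots,n\}$.

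\textbf{Step 1: the balls.} Place $n$ points $c_1,\ldots,c_n$ in $\R^d$, pairwise far apart (say pairwise distance at least $10$). Choose a radius $\rho$ with $\rho\ll\epsilon$, and fix a large ball $D$ of radius $R_0$ that contains all the $c_i$. For each vertex $i$ we will form the body
\[B_i=B(c_i,R)\cup T_i,\]
where $R$ is a large radius chosen so that all $B(c_i,R)$ are contained in a common region. This does not work directly, because these balls would all intersect one another. So we instead reverse roles: each body will be a \emph{large} ball with tiny pieces carved away, or a small ball together with a tube whose contribution to the circumradius is negligible. The clean choice is to let each body be a long thin tube of material sitting inside a ball. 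Concretely, take $B_i$ to be the union of a ball $\beta_i$ of radius $1$ centered at $c_i$ and a connected union $T_i$ of very thin tubes, with $T_i$ contained in $B(c_i,1+\epsilon)$. Then $B_i$ contains a ball of radius $1$ and is contained in a ball of radius $1+\epsilon$, so it is $(1+\epsilon)$-fat.

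\textbf{Step 2: realizing the edges.} Place all the $c_i$ within distance about $2+\epsilon$ of one another. For instance, put them on a small sphere of radius $\delta$ and use concentric ball radii, shrinking $\beta_i$ slightly so that the $\beta_i$ are essentially nested or overlapping. This is not workable as stated, since overlapping balls would create unwanted edges. The fix is to make the bodies "shells with holes": let $B_i$ be the sphere shell
\[\{x: 1\le\|x-c_i\|\le 1+\epsilon/2\},\]
which is not fat because it contains no ball of radius close to $1$.

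The right approach is therefore the following. Take $B_i$ to be a \emph{hollow} body: a thin connected "cage" (a fine mesh of tubes) lying in the shell $1\le\|x-c_i\|\le1+\epsilon$, together with one solid ball of radius $1$. Fatness asks only that some ball of radius $r$ be contained in the body and that the body lie in a ball of radius $(1+\epsilon)r$. This forces the body to contain a solid ball of radius $1$, which is the real constraint. Hence the solid cores must be pairwise disjoint, and all adjacency must be produced by tubes. To arrange this, put the cores at distance slightly more than $2$ apart along a configuration of unit balls. Each pair $i,j$ must then come within reach through tubes confined to $B(c_i,1+\epsilon)$ and $B(c_j,1+\epsilon)$, so $c_i$ and $c_j$ need $\|c_i-c_j\|<2+2\epsilon$ for every pair. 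For large $n$ this is impossible with disjoint unit cores.

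\textbf{Main obstacle and fix.} The real obstacle is therefore: every pair of bodies must be able to meet even though their fat cores are disjoint. I would resolve it using \emph{different scales}. Nest the cores: vertex $i$ gets a core ball of radius $r_i=\lambda^{i}$, with $\lambda$ large. Each body is a hollow spherical cage of radius $r_i$ around a common center, so it is contained in a ball of radius $(1+\epsilon)r_i$. Its solid part is a ball of radius $r_i$ from which a concentric region is removed, but the removal must leave a solid ball, so the core sits off-center inside the shell.

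Concretely, body $i$ is the union of a solid ball $\gamma_i$ of radius $r_i/(1+\epsilon)$ placed in a region free of the smaller bodies, together with a thin connected cage on the sphere of radius $r_i$. The smaller bodies all lie in a small neighborhood of the origin, which $\gamma_i$ avoids. In $d\ge3$, thin tubes (curves) can be routed in general position without crossing, which is where $d\ge3$ is used. Body $i$ then sends thin tentacles inward, staying within its circumball, to touch exactly the bodies $j<i$ adjacent to $i$ while avoiding all the others. Since curves in $\R^d$ with $d\ge3$ can avoid any finite union of thin tubes and compact sets of empty interior, this routing is possible. The step I expect to be hardest to write cleanly is verifying that the tentacles of body $i$ avoid the solid balls $\gamma_j$ for $j<i$; they can go around those balls because each $\gamma_j$ is a ball, whose complement is connected.
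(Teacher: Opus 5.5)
Your multi-scale idea is the right one, but the final construction has a genuine gap, and it is in the placement of the fat core rather than in the tentacle-routing step you flag. You take body $i$'s circumscribing ball about the common center; the cage on the sphere of radius $r_i$ in fact forces this, so the circumball is essentially $B(0,r_i)$. You then require the solid ball $\gamma_i$ of radius $r_i/(1+\epsilon)$ to lie inside it while avoiding the neighborhood $B(0,r_{i-1})$ of the origin that holds all smaller bodies.

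This is impossible for $\epsilon<1$, which is exactly the case you reduced to. If $B(c',\rho)\subseteq B(0,R)$, then $\|c'\|\le R-\rho$, so $B(c',\rho)\supseteq B(0,2\rho-R)$. With $\rho=r_i/(1+\epsilon)$ and $R=r_i$ this gives $\gamma_i\supseteq B\bigl(0,r_i(1-\epsilon)/(1+\epsilon)\bigr)\supseteq B(0,r_{i-1})$ once $\lambda$ is large. So $\gamma_i$ meets every smaller body and the intersection graph is complete, whatever the tentacles do. The construction could only work for $\epsilon>1$. Your other version, a core of radius $r_i$ inside a ball of radius $(1+\epsilon)r_i$ about the origin, fails the same way.

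The underlying obstruction is general. In a $(1+\epsilon)$-fat body with $\epsilon<1$, the inscribed ball contains the center of the circumscribing ball. Hence the region holding the earlier bodies must sit near the \emph{edge} of body $i$'s circumball, never at its center. The cage is precisely what pins the circumcenter at the origin, and it serves no other purpose.

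The missing idea, which is what the paper does, is to drop the cage and put the core beside the small region, tangent to it and about $\epsilon^{-1}$ times larger. Suppose all earlier bodies lie in $B_{k-1}=B(0,r_{k-1})$. Pick $p_k\in\partial B_{k-1}$ and let the core $Y_k$ be the ball of radius $\epsilon^{-1}r_{k-1}$ centered at $(1+\epsilon^{-1})p_k$. Then $B_{k-1}\cup Y_k$ lies in the ball of radius $(1+\epsilon^{-1})r_{k-1}=(1+\epsilon)\cdot\epsilon^{-1}r_{k-1}$ centered at $\epsilon^{-1}p_k$. So $Y_k$ together with any curve in $B_{k-1}$ starting at $p_k$ is a $(1+\epsilon)$-fat connected body. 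Taking $r_k=2(1+\epsilon^{-1})r_{k-1}$ puts $Y_k$ inside $\inter(B_k)\setminus\inter(B_{k-1})$, so the cores of different bodies are disjoint.

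The string of body $k$ is then routed inside $B_{k-1}$, avoiding all earlier bodies (possible since $d\ge 3$), into a central ball $B_0$. There the strings are arranged to cross exactly according to $G$. Your alternative of having tentacles touch the adjacent earlier bodies directly would also work once the cores are placed this way. Your general-position routing argument around balls and thin curves is adequate for that part.
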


\begin{proof}
Let $B_k$ be the ball centered at the origin in $\R^d$ of radius $r_k=(2+2\epsilon^{-1})^k$. We will define $X_k$ to be a $(1+\epsilon)$-fat connected body that is contained in $B_k$. The set $X_k$ will be a ``balloon'' consisting of a large sphere in $\inter(B_k)\setminus\inter( B_{k-1})$ that is tangent to $B_{k-1}$ at a point $p_k$ and a ``string'' in $B_{k-1}$: a curve starting at $p_k\in\partial B_{k-1}$, passing through $\partial B_0$ at a point $q_k$, and then continuing into $B_0$. We will define these sets $X_k$ so that they are disjoint in $\R^d\setminus\inter (B_0)$. See Figure~\ref{fig:any-graph-fat-non-convex} for an illustration of this construction.

Suppose we have defined $B_0,\ldots,B_{k-1}$ and $X_1,\ldots,X_{k-1}$. Let $p_k$ be a point on $\partial B_{k-1}$. Say $B_{k-1}$ has radius $r_{k-1}$. Let $Y_k$ be the ball of radius $\epsilon^{-1}r_{k-1}$, centered at $(1+\epsilon^{-1})p_k$. Then $B_{k-1}\cup Y_k$ is contained in the ball of radius $(1+\epsilon)\epsilon^{-1}r_{k-1}$ centered at $\epsilon^{-1}p_k$. Since $r_k=2(1+\epsilon^{-1})r_{k-1}$, we see that $Y_k$ is contained in $\inter (B_k)$. We will define $X_k$ to be $Y_k\cup\gamma_k$ where $\gamma_k$ is a curve starting from $p_k$ which is contained in $B_{k-1}$. Whatever our choice of $\gamma_k$, we have ensured that $X_k$ is $(1+\epsilon)$-fat and connected.

Pick $q_k$ to be a point in $\partial B_0\setminus \{q_1,\ldots, q_{k-1}\}$. Let $\gamma'_k$ be a curve from $p_k$ to $q_k$ whose interior is contained in
\[\inter(B_{k-1})\setminus\paren{B_0\cup\bigcup_{i=1}^{k-1}X_i}.\]
If $k=1$, we set $q_1=p_1$ and let $\gamma'_1$ have length 0. For $k>1$, clearly there is enough space in dimension $d\geq 3$ to find such a $\gamma'_k$.

Finally, define $\gamma_k=\gamma'_k\cup\delta_k$ where $\delta_k$ is a curve starting from $q_k$ whose interior is contained in $\inter(B_0)$ such that $\delta_k$ and $\delta_\ell$ intersect if and only if $k,\ell$ are adjacent in $G$. Again, in dimension $d\geq 3$, this is clearly possible.
\end{proof}

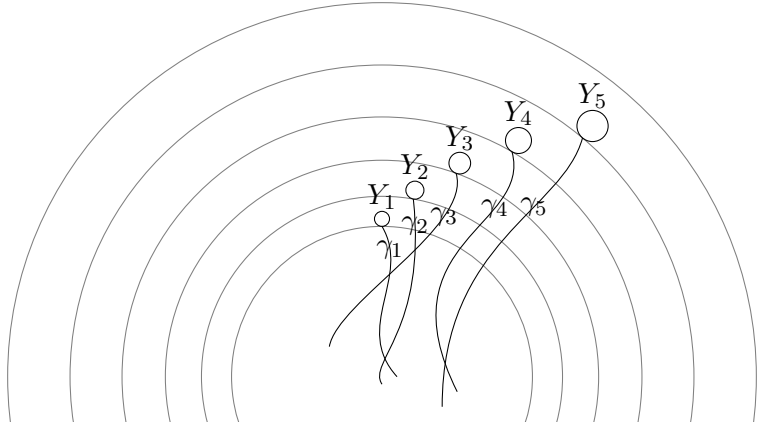
\begin{figure}
\centering
\begin{tikzpicture}[scale=2]
\clip (-2.5,-0.3) rectangle + (5,2.8);

\draw[gray] circle (1);
\draw[gray] circle (1.2);
\draw[gray] circle (1.44);
\draw[gray] circle (1.728);
\draw[gray] circle (2.0736);
\draw[gray] circle (2.48832);

\draw(0,1.05) circle (0.05);
\draw(1.05*1.2*0.1736,1.05*1.2*0.9848) circle (0.05*1.2);
\draw(1.05*1.44*0.342,1.05*1.44*0.9397) circle (0.05*1.44);
\draw(1.05*1.728*0.5,1.05*1.728*0.866) circle (0.05*1.728);
\draw(1.05*2.0736*0.6428,1.05*2.0736*0.766) circle (0.05*2.0736);

\node at (0,1.05) [above] {$Y_1$};
\node at (1.05*1.2*0.1736,1.05*1.2*0.9848+0.01) [above] {$Y_2$};
\node at (1.05*1.44*0.342,1.05*1.44*0.9397+0.02) [above] {$Y_3$};
\node at (1.05*1.728*0.5,1.05*1.728*0.866+0.033) [above] {$Y_4$};
\node at (1.05*2.0736*0.6428,1.05*2.0736*0.766+0.055) [above] {$Y_5$};

\draw(0,1) .. controls (0.2,0.66) and (-0.2,0.33) .. (0.1,0) node[pos=0.15] {$\gamma_1$};
\draw(1.2*0.1736,1.2*0.9848) .. controls (0.3,0.3) and (-0.1,0.1) .. (0,-0.05) node[pos=0.07] {$\gamma_2$};
\draw(1.44*0.342,1.44*0.9397) .. controls (0.6,1) and (-0.3,0.5) .. (-0.35,0.2) node[pos=0.25] {$\gamma_3$};
\draw(1.728*0.5,1.728*0.866) .. controls (1,1) and (0,0.9) .. (0.5,-0.1) node[pos=0.3] {$\gamma_4$};
\draw(2.0736*0.6428,2.0736*0.766) .. controls (1.2,1.1) and (0.4, 0.9) .. (0.4,-0.2) node[pos=0.35] {$\gamma_5$};

\end{tikzpicture}
\caption{\label{fig:any-graph-fat-non-convex}
The construction in Proposition~\ref{prop:any-graph-fat-non-convex}.}
\end{figure}

\section{Algorithmic implementations}
\label{sec:alg}

Suppose we are given $\fB$ as a collection of centers and radii of balls. We give a randomized linear time algorithm that computes a small separator for $G(\fB)$. 

\begin{theorem}
\label{thm:ball-separator-alg}
The separator for intersection graphs of balls given in Theorem~\ref{thm:ball-separator-lp} can be computed by a randomized linear time algorithm that succeeds with probability at least $1/2$.
\end{theorem}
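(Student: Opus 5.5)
We make the proof of Theorem~\ref{thm:ball-separator-lp} algorithmic one step at a time. That proof has three ingredients. First, a ball $B_0$ of (near-)minimal radius containing $\approx 5^{-d}n$ centers. Second, a uniformly random radius $r\in[1,2]$. Third, the bound $\E[|\fX|]\le 4\bigl(\sum_B p(B)^{1/(d-1)}\bigr)^{(d-1)/d}$. The last two are already algorithmic: once the center and radius of $B_0$ are known, drawing $r$ and testing each ball against the sphere of radius $r$ takes $O(n)$ time. The same linear-time scan also checks the balance condition and computes $|\fX|$. By Markov's inequality, $|\fX|\le 2\E[|\fX|]$ with probability at least $1/2$, which gives the claimed success probability for one round. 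The round also succeeds for the $c$-balanced intermediate step. The iteration of Lemma~\ref{lem:balancing-separators} uses only $O_d(1)$ rounds. We can either boost each round to success probability $1-1/(4\lceil\log_c(2/3)\rceil)$ by $O_d(1)$ independent repetitions, keeping the first output that passes the deterministic size check, or simply accept a constant-factor loss. Either way the total work is $O_d(n)$, since each round operates on a subset of the balls.

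The main obstacle is computing $B_0$ in linear time: we need the smallest ball containing at least $k=\lceil 5^{-d}n\rceil$ of the $n$ centers. Finding it exactly is not linear, so I would weaken the requirement. The balancing argument only needs a ball $B_0$ containing at least $k$ centers such that the concentric ball $B(2)$ (after rescaling) contains at most $(1-c)n$ centers for some constant $c=c_d>0$. This holds if $B_0$ has radius at most a constant factor $\alpha_d$ times the optimum radius $r_{\mathrm{opt}}(k)$. Indeed, $B(2)$ is then covered by $O_d(1)$ balls of radius $r_{\mathrm{opt}}(k)$, each containing at most $k$ centers. Choosing $k=\lceil \beta_d n\rceil$ with $\beta_d$ small enough keeps the count of centers in $B(2)$ below $(1-\beta_d)n$. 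The constant in the analysis of $\E[|\fX|]$ is unaffected, because that analysis only uses that the balls of $\fB'$ lie in $B(2)$.

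To find such an approximately smallest $k$-enclosing ball, I would use Har-Peled's randomized linear-time algorithm, or the simpler argument of Har-Peled and Mazumdar. Sample a random subset of $O_d(1)$ points, scaled so that the sample sees a constant fraction of any heavy ball. Compute a constant-factor approximation on the sample by brute force over the sample points as candidate centers. Then verify and refine with grid bucketing. Hash the centers into a grid of side comparable to the candidate radius. In $O(n)$ time, count the centers in each cell and in each cell's $O_d(1)$ neighbors. This either certifies a ball of radius $O_d(\text{candidate})$ containing at least $k$ points, or shows the candidate is too small, in which case we double it. The delicate point, which I expect to be the hardest to get right, is ensuring that the expected number of refinement rounds is $O(1)$, so that the total time stays linear rather than $O(n\log n)$. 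I would handle it with the standard geometric-decrease argument: each round discards a constant fraction of the points that cannot lie in an optimal heavy ball. I would cite that result rather than reprove it.

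Finally, I would assemble the pieces. Compute an approximate $B_0$, rescale, draw $r$, and output $\fX$. Verify balance and size in $O(n)$, apply the bounded iteration of Lemma~\ref{lem:balancing-separators}, and conclude with overall success probability at least $1/2$ after the constant amplification above.
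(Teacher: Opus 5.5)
Your reduction is right. You correctly identify four things. The only non-algorithmic ingredient is $B_0$. A constant-factor approximation of the smallest ball containing $\Theta_d(n)$ centers suffices for balance, after shrinking $\beta_d$. The bound on $\E[|\fX|]$ only uses the rescaling that puts the random radius in $[1,2]$. Markov plus $O_d(1)$ rounds of Lemma~\ref{lem:balancing-separators} finishes.

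Where you differ from the paper is in how $B_0$ is approximated. You import a general linear-time algorithm for the approximate smallest $k$-enclosing ball, with grids, doubling and geometric-decrease refinement, and you flag its running-time analysis as the delicate step. The paper avoids all of this with one elementary observation: if $B_0$ is an optimal ball of radius $r$, then for every center $p\in B_0$ the ball of radius $2r$ about $p$ contains $B_0$. Since $B_0$ holds a $9^{-d}$ fraction of the centers, sampling $C_d 9^d$ centers uniformly hits $B_0$ with probability at least $1-e^{-C_d}$. For each sampled $p_i$ one computes, by linear-time selection on the $n$ distances, the smallest radius $r_i$ about $p_i$ enclosing $\lceil 9^{-d}n\rceil$ centers. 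On the good event, $\min_i r_i\le 2r$.

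This takes $O_d(n)$ worst-case time with no refinement rounds, so the difficulty you anticipate does not arise when $k=\Theta_d(n)$. In fact the first half of your sketch, using sampled points as candidate centers, already is the paper's method. The only change is to evaluate each candidate against the full point set rather than against the sample. Your route gives a tool valid for every $k$, but it rests on an external result with only expected linear time, which you would still need to truncate via Markov. The paper's argument is self-contained and much shorter.

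One small correction: your ``deterministic size check'' option is not implementable as stated. The target $4\bigl(\sum_B p(B)^{1/(d-1)}\bigr)^{(d-1)/d}$ involves the plies $p(B)$, which you cannot compute in linear time. There are two fixes. You can keep the repetition with the smallest $|\fX|$, which needs no threshold. Or, as the paper does, apply Markov with a constant $C_d$ large compared to the number $\lceil\log_c(2/3)\rceil$ of rounds and take a union bound; this is presumably what you meant by accepting a constant-factor loss.
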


\begin{proof}
Write $\cP$ for the centers of the balls of $\fB$. Let $r$ be the smallest radius such that there exists a ball $B_0$ of radius $r$ that contains at least $9^{-d}n$ points of $\cP$. Fix such a ball $B_0$. Then for any $p\in \cP\cap B_0$, the ball of radius $2r$ centered at $p$ contains $B_0$ and thus contains at least $9^{-d}n$ points of $\cP$.

Our first goal is to efficiently find an approximation for this ball $B_0$. Fix a large constant $C_d$ and sample $C_d\cdot 9^{d}$ points $p_1,p_2,\ldots\in\cP$ uniformly at random. For each point $p_i$, compute the smallest $r_i$ such that the ball of radius $r_i$ centered at $p_i$ contains $9^{-d}n$ points of $\cP$. This can be computed in linear time for each $p_i$. Say we succeed if $\min_i r_i\leq 2r$. By the discussion above, we succeed if some $p_i\in\cP\cap B_0$, which occurs with probability at least
\[1-\paren{1-9^{-d}}^{C_d\cdot 9^d}\geq 1-e^{-C_d}.\]

Pick $1\leq i\leq C_d\cdot9^d$ such that $r_i$ is minimal. As in the proof of Theorem~\ref{thm:ball-separator-lp}, pick $r_{\mathrm{rand}}\in[r_i,2r_i]$ uniformly at random and let $\fX$ be the set of balls $B\in\fB$ which intersect the sphere of radius $r_{\mathrm{rand}}$ centered at $p_i$. This set can be computed in linear time. Now we still have \[\E[|\fX|]\leq 4\paren{\sum_{B\in\fB}p(B)^{\tfrac1{d-1}}}^{1-\tfrac1d},\]
so by Markov's inequality,
\[\Pr\sqb{|\fX|\leq 4C_d\paren{\sum_{B\in\fB}p(B)^{\tfrac1{d-1}}}^{1-\tfrac1d}}\geq 1-C_d^{-1}.\]

Now the set $\fX$ is a $(1-9^{-d})$-balanced separator for $G(\fB)$. To see this, note that the ball of radius $r_i$ centered at $p_i$ contains $B_0$ which contains $\ceil{9^{-d}n}$ points of $\cP$. Furthermore, the ball of radius $2r_i$ centered at $p_i$ can be covered by at most $8^d$ translates of $B_0$, so it contains at most $8^d\ceil{9^{-d}}n<(1-9^{-d})n$ points of $\cP$.

Thus we have a randomized linear time algorithm which computes a small $(1-9^{-d})$-balanced separator for $G(\fB)$ with probability at least $1-C_d^{-1}$. By Lemma~\ref{lem:balancing-separators}, for $C_d$ sufficiently large, we can iterate this algorithm $\ceil{\log_{1-9^{-d}}(2/3)}$ times to produce a $\tfrac23$-balanced separator for $G(\fB)$ still in linear time with success probability at least $\tfrac12$.
\end{proof}

Suppose we are given $\fS$ as a list of centers and radii of the spheres. In quadratic time we can compute which pairs of spheres intersect and which pairs are contained in each other. From this we can run the algorithm described in Theorem~\ref{thm:sphere-separator-main} in quadratic time. The only potential difficulty is in selecting the minimal radius ball $B_{\mathrm{in}}$, but this can be handled the same way is in Theorem~\ref{thm:ball-separator-alg}.

\begin{theorem}
\label{thm:sphere-separator-alg}
The separator for intersection graphs of spheres given in Theorem~\ref{thm:sphere-separator-main} can be computed by a randomized quadratic time algorithm that succeeds with probability at least $1/2$.
\end{theorem}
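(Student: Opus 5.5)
We will implement each step of the proof of Theorem~\ref{thm:sphere-separator-main} directly, after an $O(n^2)$ preprocessing pass. For every pair of spheres we compute in constant time whether they intersect, whether one contains the other in its interior, or whether they are disjoint and mutually exterior. This gives the adjacency lists, all degrees, the containment relation, and the value $\Sigma$ (up to the constant $C_d$). Step~1 is then a sort by degree, or a linear-time selection, which gives $\fX_1$. Step~2 needs, for each $S\in\fS_1$, the number of centers of spheres of $\fS_1$ lying inside $S$, which is $O(n)$ per sphere. From this we find the minimal-radius $S_0$ with at least $4c_dn$ centers and the set $\fS_{\supseteq S_0}$. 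If the nested case occurs, we run Lemma~\ref{lem:nested-sphere-separator}: a topological sort of the containment poset restricted to the spheres containing an interior point of $S_0$ takes $O(n^2)$ time, and we output the neighborhood of the median. Otherwise $\fX_2=N(S_0)$, and $\fS_2$ is read off from the containment data.

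Step~3 is the only place where exact computation is costly, namely finding a minimal ball containing $4c_dn$ centers. We replace it with the sampling trick from Theorem~\ref{thm:ball-separator-alg}. We sample $O_d(1)$ centers of $\fS_2$ and, for each, compute in linear time the smallest radius $r_i$ of a ball around it containing $4c_dn$ centers. We then take the minimum $r_i$ and its center. With constant probability this is at most twice the optimal radius. This only changes the constants: $B_{\mathrm{in}}$ now contains at least $4c_dn$ centers, while the ball of radius $2r_i$ is covered by $O(8^d)$ balls of the optimal radius, so it contains $O_d(c_dn)<(1-c_d)n$ centers for $c_d$ small. We must also recheck the fact, used in Step~3, that no sphere of $\fS_2$ contains $B_{\mathrm{in}}$ in its interior. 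This still holds, because $B_{\mathrm{in}}$ contains at least $4c_dn$ centers and no sphere of $\fS_2$ contains that many. The final claim's case analysis therefore goes through with adjusted constants, and $c_d$ and $C_d$ are re-chosen accordingly. Next we pick the random radius in $[R,2R]$. For each sphere we compute in $O(1)$ time whether it meets $S_{\mathrm{rand}}$ and the cap radius $\tilde r(S)$, which gives $\fX_3$. By Claim~\ref{claim:expected-x3-size} and Markov's inequality, $|\fX_3|\leq C\Sigma/4$ with probability at least $1-1/C$.

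Step~4 is a greedy loop. We sort the spheres of $\fS_3$ meeting $S_{\mathrm{rand}}$ by radius and repeatedly take the largest surviving one, $S_i$. We add $N(S_i)$ to $\fX_4$ and delete every sphere meeting the ball bounded by $S_i$; these are the spheres that intersect $S_i$ or lie inside it, both known from preprocessing. Each iteration costs $O(n)$ and there are at most $n$ iterations, so this step is $O(n^2)$. Claim~\ref{claim:x4-size} holds deterministically once $S_{\mathrm{rand}}$ is fixed.

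The main obstacle is bookkeeping of failure probabilities and constants. Each run yields a $(1-c_d)$-balanced separator of size $O_d(\Sigma)$ with probability bounded away from zero, the failure sources being the sampling in Step~3 and the Markov bound in Step~3. Lemma~\ref{lem:balancing-separators} requires applying the procedure to $\lceil\log_{1-c_d}(2/3)\rceil=O_d(1)$ induced subgraphs. We apply it to each of them with $O_d(1)$ independent repetitions, and we can check success directly by computing the separator size and the component sizes in $O(n^2)$ time. A union bound then gives overall success probability at least $1/2$, in total time $O_d(n^2)$.
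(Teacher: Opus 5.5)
Your proposal is correct and follows the paper's route: precompute all intersection and containment relations in quadratic time, implement Steps 1--4 of Theorem~\ref{thm:sphere-separator-main} directly, and replace the exact minimal ball $B_{\mathrm{in}}$ by the sampling approximation from Theorem~\ref{thm:ball-separator-alg}. You also supply details the paper leaves implicit, namely re-checking that no sphere of $\fS_2$ contains the approximate $B_{\mathrm{in}}$ and amplifying the success probability by verifying separator size and component sizes; the only cosmetic fix is that in the nested case the poset should be taken on the spheres containing one fixed interior point of $S_0$ (say its center) in their interior.
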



\end{document}